\documentclass[12pt,leqno]{amsart}

\usepackage{latexsym,amsmath,amssymb,amsthm,amsfonts}

\usepackage{geometry}                
\geometry{letterpaper}                   
\usepackage{graphicx}
\usepackage{epstopdf}
\DeclareGraphicsRule{.tif}{png}{.png}{`convert #1 `dirname #1`/`basename #1 .tif`.png}

\renewcommand{\epsilon}{\varepsilon}

\newcommand{\R}{\mathbb{R}}

\DeclareMathOperator{\diam}{diam}

\newtheorem{theorem}{Theorem}
\newtheorem{lemma}[theorem]{Lemma}
\newtheorem{proposition}[theorem]{Proposition}

 \title[Mesh ratios for best-packing]{Mesh ratios for best-packing and limits of   minimal energy configurations}

\author{ A. V. Bondarenko$^*$, D. P. Hardin$^\dagger$, and E. B. Saff$^\dagger$} 
\thanks{ $^*$The research of this author was conducted while visiting the Center for Constructive Approximation in the Department of Mathematics,
Vanderbilt University and the Mathematisches Forschungsinstitut Oberwolfach.\\
 $^\dagger$The research of these authors  was supported, in part, by the U. S. National Science Foundation under grant
  DMS-1109266. }

\date{\today}

\address{A.  V. Bondarenko: Centre de Recerca Matem\`atica Campus de Bellaterra, Edifici C, 08193
Bellaterra (Barcelona), Spain and Department of Mathematical
Analysis, National Taras Shevchenko University, str. Volodymyrska,
64 Kyiv, 01033, Ukraine\\ D. P. Hardin and E. B. Saff: Center for
Constructive Approximation, Department of Mathematics, Vanderbilt
University, Nashville, TN 37240, USA } \email{andriybond@gmail.com}
\email{Doug.Hardin@Vanderbilt.Edu}
\email{Edward.B.Saff@Vanderbilt.Edu}

\keywords{ Best-packing, mesh norm, separation distance, quasi-uniformity, Riesz energy, covering constant}
\subjclass[2000]{Primary: 31C20, 65N50, 57N16; Secondary: 52A40, 28A78 }

\begin{document}

\begin{abstract}
For $N$-point best-packing configurations $\omega_N$ on a compact 
metric space $(A,\rho)$, we obtain estimates for the mesh-separation ratio $\gamma(\omega_N,A)$, which is the quotient of the covering radius of $\omega_N$ relative to $A$ and the minimum pairwise distance between points in $\omega_N$.  For best-packing configurations $\omega_N$ that arise as limits of
minimal Riesz $s$-energy   configurations as $s\to \infty$, we prove that $\gamma(\omega_N,A)\le 1$ and this bound can be attained even for the sphere.  In the particular case when $N=5$ on $S^2$ with $\rho$ the Euclidean metric, we prove our main result that among the infinitely many 5-point best-packing configurations there is a unique  configuration, namely a square-base pyramid $\omega_5^*$, that is the  limit (as $s\to \infty$) of 5-point $s$-energy minimizing configurations.  Moreover,   $\gamma(\omega_5^*,S^2)=1$.
\end{abstract}

\maketitle

\section{Introduction}
Let $A$ be a compact infinite metric space with metric $\rho: A\times A\rightarrow[0,\infty)$ and  let $\omega_N=\{x_i\}_{i=1}^N\subset  A$ denote a configuration of $N\ge 2$ points in $A$.
We are chiefly concerned with  two `quality' measures of $\omega_N$; namely, the {\em separation distance of $\omega_N$}   defined by
\begin{equation}\delta(\omega_N)=\delta^\rho(\omega_N):=\min_{1\leq i\not= j\leq N}\rho(x_i,x_j),
\end{equation}
and    the   {\em mesh norm}  {\em of $\omega_N$ with respect to   $A$}   defined by
 \begin{equation}\eta(\omega_N,A)=\eta^\rho(\omega_N,A):=\max_{y\in A}\min_{1\leq i\leq N}\rho(y,x_i).
 \end{equation}
 This quantity is also known as the  {\em fill radius} or {\em covering radius} of $\omega_N$ relative to $A$.
The optimal values of these quantities are also of interest and we consider, for $N\ge 2$,  the {\em $N$-point best-packing
 distance on $A$} given by 
 \begin{equation}\label{deltaNdef}
 \delta_N(A)= \delta_N^\rho(A):=\max \{\delta(\omega_N)\colon \omega_N\subset  A,\, \#\omega_N =N\},
 \end{equation}
 and the {\em $N$-point mesh norm}  of  $A$  given by
 \begin{equation}\label{etaNdef}
\eta_N(A)=\eta_N^\rho(A):=\min  \{\eta(\omega_N,A)\colon \omega_N\subset  A,\, \#\omega_N=N\},
\end{equation}
 where $\#S$ denotes the cardinality of set $S$.   A configuration  $\omega_N$ of $N$ points in $A$ is called a {\em best-packing configuration for $A$} if 
 $\delta(\omega_N)=\delta_N(A)$.

  In the theory of approximation and interpolation (for example, by splines or radial basis functions (RBFs)), the separation distance is often associated with some measure of `stability' of the approximation, while the mesh norm arises in the  error of the approximation. In this context, the  {\em mesh-separation ratio} (or {\em mesh ratio}) $$\gamma(\omega_N,A):=\eta(\omega_N,A)/\delta(\omega_N),$$
 can be regarded as a `condition number' for $\omega_N$ relative to $A$.   If $\{\omega_N\}_{N=2}^\infty$ is a sequence of $N$-point configurations such that $\gamma(\omega_N,A)$ is uniformly bounded in $N$, then the sequence is said to be {\em quasi-uniform on $A$}.
    Quasi-uniform sequences of configurations are important for a number of methods involving RBF approximation and interpolation (see \cite{ FW, LGSW, P, S}).

 We remark that in some cases it is easy to obtain positive lower bounds for the mesh-separation ratio.  For example, if $A$ is connected, then $\gamma(\omega_N,A)\ge 1/2$.   Furthermore, letting
 $$B(x,r):=\{y\in A: \rho(y,x)\leq r\}$$ denote the closed ball in $A$ with center $x$ and radius $r$,
  then
 $\gamma(\omega_N,A)\ge \beta/2$  for any $N$-point configuration $\omega_N\subset  A$ whenever $A$ and $\beta\in(0,1)$ have the property that for any $r\in (0,\diam (A)]$ and any $x\in A$, the annulus $B(x,r)\setminus B(x ,\beta r)$
 is nonempty.     The {\it diameter} of $A$ is defined by $$\diam(A):=\max\{\rho(x,y)\colon x\in A,\ y\in A\}.$$
 
 The outline of the paper is as follows.  In Section~2 we present two simple but basic results concerning  the mesh-separation ratio for best-packing configurations on general sets.  In Section~3, we obtain lower bounds for this ratio for any best-packing configuration on the sphere in $\R^n$ and, in Section~4, we study the special case of minimal Riesz $s$-energy 5-point configurations on $S^2$ and determine their limiting best-packing configuration as $s\to \infty$.  Section~5 is devoted to a brief discussion of some special best-packing configurations on $S^n$.
 
\section{Mesh-separation ratio for general sets}
The following simple result is of the same 
spirit as that of Proposition 2.1 of \cite{Wardetal}.

\begin{theorem}\label{Thm0}
Let $(A,\rho)$ be a compact infinite metric space.  Then, for each $N\ge 2$, there exists an $N$-point best-packing configuration $\omega_N$    on $A$ such that $\gamma(\omega_N,A)\le 1$.  In particular, this holds for any best-packing configuration $\omega_N=\{x_1,\ldots, x_N\}$ having the minimal
number of  pairs of points $\{x_i, x_j\}$ such that
$\rho(x_i,x_j)=\delta_N(A)$.
\end{theorem}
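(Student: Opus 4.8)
The plan is to argue by contradiction, showing that a best-packing configuration with too large a mesh norm cannot be one with the minimal number of extremal pairs. Suppose $\omega_N=\{x_1,\dots,x_N\}$ is a best-packing configuration with $\gamma(\omega_N,A)>1$, i.e.\ $\eta(\omega_N,A)>\delta_N(A)$. Then there is a point $y\in A$ whose distance to every point of $\omega_N$ is strictly greater than $\delta_N(A)$, so in particular $\rho(y,x_i)>\delta_N(A)$ for all $i$. The idea is to swap $y$ in for a well-chosen point $x_k$ of $\omega_N$ and observe that the resulting configuration is again a best-packing configuration but with strictly fewer pairs at distance exactly $\delta_N(A)$ — contradicting minimality.

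The key step is choosing which point $x_k$ to replace. First I would single out those points of $\omega_N$ that participate in an extremal pair; since $\omega_N$ is best-packing and $A$ is infinite, $\delta(\omega_N)=\delta_N(A)$ and there is at least one pair $\{x_i,x_j\}$ with $\rho(x_i,x_j)=\delta_N(A)$, so the set of such points is nonempty. Pick any $x_k$ belonging to at least one extremal pair, and form $\omega_N':=(\omega_N\setminus\{x_k\})\cup\{y\}$. I must check two things: (i) $\omega_N'$ still has $N$ distinct points — this holds because $\rho(y,x_i)>\delta_N(A)>0$ for all $i$, so $y\notin\omega_N$; and (ii) $\delta(\omega_N')\ge\delta_N(A)$. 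For (ii), any pair in $\omega_N'$ is either a pair of old points $\{x_i,x_j\}$ with $i,j\ne k$, hence at distance $\ge\delta_N(A)$, or a pair $\{y,x_i\}$ with $i\ne k$, at distance $>\delta_N(A)$ by the choice of $y$. Combined with the trivial upper bound $\delta(\omega_N')\le\delta_N(A)$, this shows $\omega_N'$ is again a best-packing configuration.

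It then remains to compare the number of extremal pairs. Let $P(\omega)$ denote the number of pairs $\{x,x'\}\subset\omega$ with $\rho(x,x')=\delta_N(A)$. Passing from $\omega_N$ to $\omega_N'$, we removed the point $x_k$: every extremal pair of $\omega_N$ that involved $x_k$ is gone (there was at least one), every extremal pair not involving $x_k$ survives, and the new pairs $\{y,x_i\}$ all have distance $>\delta_N(A)$, so contribute nothing. Hence $P(\omega_N')\le P(\omega_N)-1<P(\omega_N)$, contradicting the assumption that $\omega_N$ has the minimal number of extremal pairs among all $N$-point best-packing configurations. (Such a minimizer exists because $A$ admits at least one best-packing configuration and $P$ takes values in the finite set $\{1,2,\dots,\binom{N}{2}\}$.) Therefore every best-packing configuration with the minimal number of extremal pairs satisfies $\gamma(\omega_N,A)\le1$, which proves both assertions of the theorem.

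The main obstacle is essentially bookkeeping rather than a deep difficulty: one must be careful that replacing $x_k$ by $y$ does not accidentally destroy the property of being a best-packing configuration (handled by the strict inequality $\rho(y,x_i)>\delta_N(A)$) and that the count of extremal pairs genuinely drops (handled by choosing $x_k$ inside some extremal pair and noting the new pairs are all strictly longer). One subtle point worth stating explicitly is why $\delta(\omega_N)=\delta_N(A)$ forces the existence of an extremal pair in the first place — this is immediate from the definition of $\delta$ as a minimum over a nonempty finite set — and why, if $\gamma(\omega_N,A)>1$ fails to hold for the chosen minimizer, we are done; these are the two ends that the contradiction ties together.
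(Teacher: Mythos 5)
Your proposal is correct and follows essentially the same argument as the paper: take a best-packing configuration minimizing the number of pairs realizing $\delta_N(A)$, and if its mesh norm exceeded $\delta_N(A)$, swap a point from an extremal pair for a far-away point to get a best-packing configuration with strictly fewer extremal pairs, a contradiction. Your version just spells out the bookkeeping (distinctness, preservation of the packing distance, the strict drop in the pair count) a bit more explicitly than the paper does.
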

\begin{proof}
Let $\omega_N$ be a best-packing configuration on $A$ having the minimal
number of unordered pairs of points $\{x_i, x_j\}$ such that
$\rho(x_i,x_j)=\delta_N(A)$. If $\eta(\omega_N,A)>\delta_N(A)$,
then select a point $a\in A$ such that
$\rho(a,x_i)>\delta_N(A)$ for $i=1,\ldots,N$, and  choose a point $x_\ell$ from some pair $\{x_k,x_\ell\}$ such that $\rho(x_k,x_\ell)=\delta_N(A)$.  Let $\omega_N'$ be the best-packing configuration obtained by replacing $a$ in $\omega_N$ by $x_\ell$.    Clearly, $\omega_N'$ has fewer
unordered pairs of points $\{x_i, x_j\}$ such that
$\rho(x_i,x_j)=\delta_N(A)$ than $\omega_N$. This contradiction proves
Theorem~\ref{Thm0}.
\end{proof}
On the other hand, there exist examples of compact metric spaces
$(A, \rho)$ for which
\begin{equation}
\label{supsup}
\limsup_{N\to\infty}\sup \{\gamma(\omega_N,A)\mid \delta(\omega_N)=\delta_N(A)\}=\infty,
\end{equation}
  as we now show.\\
{\bf Example 1.} Let $A$ be the standard $1/3$ Cantor set in [0,1]  and let $\rho$ be the
Euclidean metric.  
For each $N\in{\mathbb N}$, the set $A$ is  contained in the union of $2^N$
disjoint intervals of length $3^{-N}$ with endpoints $0=x_1^N<x^N_2<\ldots<x^N_{2^{N+1}}=1$ which belong to $A$.  For any configuration of $2^N+1$ points in $A$,  at least one of the intervals  of length $3^{-N}$ must contain at least two points from the configuration showing that $\delta_{2^N+1}(A)\le
3^{-N}$. On the other hand, the   configuration 
$\omega_{2^N+1}:=\{x_1^N,\ldots, x_{2^N+1}^N=2/3\}$
is a best-packing configuration since $\delta(\omega_{2^N+1})=\delta_{2^N+1}(A)= 3^{-N}$
and has mesh norm $\eta(\omega_{2^N+1},A)=1/3$.  Thus \eqref{supsup} holds. \vspace{5mm}

Best-packing configurations arise as limits of minimum energy configurations as we now describe. 
For  a configuration $\omega_N:=\{x_1,\ldots ,x_N\}\subset A$  of $N\geq 2$ distinct points
and $s>0$,
 the  {\em   Riesz $s$-energy of $\omega_N$} is defined by
$$
E_s(\omega_N)=E_s^\rho(\omega_N):=\sum_{1\leq i\neq j\leq N}{\frac
{1}{\rho(x_i,x_j)^s}}=\sum_{i=1}^N\sum_{ {j=1} \atop { j\neq i}}^N\frac{1}{\rho(x_i,x_j)^s},
$$
while  the {\em $N$-point Riesz $s$-energy of $A$} is defined by
\begin{equation} \label{c2'}\mathcal E_s(A,N)=\mathcal E_s^\rho(A,N):=\inf \{E_s(\omega_N) : \omega_N\subset A ,
 \# \omega_N =N\}.
\end{equation}    An $N$-point configuration $\omega_N\subset A$ is said to {\it $s$-energy minimizing} if 
$E_s(\omega_N)=\mathcal E_s(A,N)$.

\begin{proposition}[\cite{BHS07}]\label{prop2}
Let $(A,\rho)$ be an infinite compact metric space.  For each fixed $N\ge 2$, 
$$
\lim_{s\to \infty}\mathcal{E}_s(A,N)^{1/s}=\frac{1}{\delta_N(A)}.
$$
Moreover, every cluster point as $s\to \infty$ of $s$-energy minimizing $N$-point configurations on $A$  is an $N$-point best-packing configuration on $A$.
\end{proposition}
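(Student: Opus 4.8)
The plan is to prove the two assertions separately, starting with the energy-limit formula. For the first part, I would obtain matching upper and lower bounds for $\mathcal{E}_s(A,N)^{1/s}$. The upper bound is easy: take any best-packing configuration $\omega_N^*$ with $\delta(\omega_N^*)=\delta_N(A)$; then every pairwise distance is at least $\delta_N(A)$, so $E_s(\omega_N^*)\le N(N-1)\,\delta_N(A)^{-s}$, whence $\mathcal{E}_s(A,N)^{1/s}\le (N(N-1))^{1/s}\,\delta_N(A)^{-1}\to \delta_N(A)^{-1}$ as $s\to\infty$. For the lower bound, note that for \emph{any} $N$-point configuration $\omega_N$, at least one pairwise distance is $\le \delta_N(A)$ (indeed, the closest pair realizes $\delta(\omega_N)\le\delta_N(A)$), so $E_s(\omega_N)\ge \delta(\omega_N)^{-s}\ge \delta_N(A)^{-s}$; taking the infimum and the $1/s$-th root gives $\mathcal{E}_s(A,N)^{1/s}\ge \delta_N(A)^{-1}$. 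Combining the two bounds yields the limit.

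For the second part, let $\omega^*$ be a cluster point as $s\to\infty$ of $s$-energy minimizing configurations; say $\omega_{s_k}\to\omega^*$ (coordinatewise, or in the Hausdorff metric on $A^N$) along some sequence $s_k\to\infty$, where each $\omega_{s_k}$ is $s_k$-energy minimizing. I would first argue that $\omega^*$ consists of $N$ distinct points: if two points of $\omega^*$ coincided, then for large $k$ two points of $\omega_{s_k}$ would be within distance $\epsilon$ of each other for every $\epsilon>0$, forcing $E_{s_k}(\omega_{s_k})\ge \epsilon^{-s_k}$, and one checks (using the upper bound from the first part, $\mathcal{E}_{s_k}(A,N)^{1/s_k}\to\delta_N(A)^{-1}<\infty$) that this is incompatible with $\omega_{s_k}$ being energy-minimizing once $\epsilon<\delta_N(A)$. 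Hence $\delta(\omega^*)>0$, and continuity of $\rho$ gives $\delta(\omega_{s_k})\to\delta(\omega^*)$.

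Now I claim $\delta(\omega^*)=\delta_N(A)$, which is exactly the statement that $\omega^*$ is a best-packing configuration. Since $\delta(\omega^*)\le\delta_N(A)$ trivially, it suffices to rule out $\delta(\omega^*)<\delta_N(A)$. If $\delta(\omega_{s_k})\le\delta(\omega^*)+o(1)<\delta_N(A)$ for large $k$, then $E_{s_k}(\omega_{s_k})\ge \delta(\omega_{s_k})^{-s_k}$, so $\mathcal{E}_{s_k}(A,N)^{1/s_k}=E_{s_k}(\omega_{s_k})^{1/s_k}\ge \delta(\omega_{s_k})^{-1}\to \delta(\omega^*)^{-1}>\delta_N(A)^{-1}$, contradicting the limit established in the first part. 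Therefore $\delta(\omega^*)=\delta_N(A)$, so $\omega^*$ is an $N$-point best-packing configuration on $A$.

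The main obstacle, such as it is, lies in the second part: one must be careful about what ``cluster point'' means (compactness of $A^N$ makes cluster points exist, but the distinctness argument needs the quantitative energy comparison) and one must not merely cite $\liminf$ of energies but use the genuine limit from the first part to get the strict inequality that produces the contradiction. Everything else is routine estimation; since this is Proposition~\ref{prop2} quoted from \cite{BHS07}, I would likely just give this short self-contained argument or refer the reader there.
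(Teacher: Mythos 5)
Your proof is correct. The paper itself does not prove Proposition~\ref{prop2} --- it simply cites \cite{BHS07} --- and your argument (the two-sided bound $\delta_N(A)^{-s}\le\mathcal{E}_s(A,N)\le N(N-1)\,\delta_N(A)^{-s}$ for the limit, followed by the observation that a cluster point with $\delta(\omega^*)<\delta_N(A)$ would force $\mathcal{E}_{s_k}(A,N)^{1/s_k}\ge \delta(\omega_{s_k})^{-1}\to\delta(\omega^*)^{-1}>\delta_N(A)^{-1}$) is precisely the standard argument given in that reference.
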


The following theorem concerning the mesh-separation ratio of  best-packing
configurations that arise as cluster points of $s$-energy minimizing
configurations generalizes, simplifies, and  improves Theorem~7 of~\cite{HSW}.
\begin{theorem}\label{Thm0.5}
For a fixed $N\ge 2$, let $\omega_N$ be a cluster point as $s\to\infty$ of a family of
$N$-point $s$-energy minimizing configurations on a compact metric
space $(A,\rho)$. Then
$\gamma(\omega_N,A)\le1$.
\end{theorem}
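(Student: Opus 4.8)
The plan is to argue by contradiction, exploiting the fact that a best-packing configuration arising as a limit of $s$-energy minimizers cannot have a "deep hole" far from all its points, because moving a point into such a hole would strictly decrease the $s$-energy for large $s$. Suppose $\gamma(\omega_N,A)>1$, i.e. $\eta(\omega_N,A)>\delta(\omega_N)=\delta_N(A)$ (the last equality by Proposition~\ref{prop2}). Then there is a point $a\in A$ with $\rho(a,x_i)>\delta_N(A)$ for all $i$. The idea is to show that for $s$ sufficiently large, replacing the point of $\omega_N$ that carries the most energy by a point suitably close to $a$ produces a configuration of strictly smaller $s$-energy than any configuration near $\omega_N$, contradicting the assumption that $\omega_N$ is a cluster point of $s$-energy minimizers.

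First I would fix $\epsilon>0$ with $\delta_N(A)+\epsilon<\eta(\omega_N,A)$ and observe that, since $\omega_N$ is a cluster point as $s\to\infty$, there is a sequence $s_k\to\infty$ and $s_k$-energy minimizing configurations $\omega_N^{(k)}=\{x_1^{(k)},\dots,x_N^{(k)}\}$ converging to $\omega_N$. For $k$ large, every pairwise distance in $\omega_N^{(k)}$ is at most $\delta_N(A)+\epsilon$ (since they approach the distances in $\omega_N$, all of which are $\le\delta_N(A)$, using that $\delta_N(A)$ is the best-packing distance; in fact I only need an upper bound on the minimum pairwise distance, which is $\le\delta_N(A)$ by definition). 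Hence in $\omega_N^{(k)}$ there is a pair at distance $\le\delta_N(A)$, so one of its members, call it $x_\ell^{(k)}$, satisfies $\sum_{j\ne\ell}\rho(x_\ell^{(k)},x_j^{(k)})^{-s_k}\ge \delta_N(A)^{-s_k}$, and consequently the point of $\omega_N^{(k)}$ of maximal point-energy has point-energy at least $\delta_N(A)^{-s_k}$.

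Next, choose $a_k\in A$ with $\rho(a_k,x_i^{(k)})>\delta_N(A)+\epsilon/2$ for all $i$; such $a_k$ exists for $k$ large because $\omega_N^{(k)}\to\omega_N$ and $a$ is at distance $>\delta_N(A)+\epsilon$ from $\omega_N$. Form $\widetilde\omega_N^{(k)}$ by deleting the maximal-energy point of $\omega_N^{(k)}$ and inserting $a_k$. The energy of $\widetilde\omega_N^{(k)}$ is at most $E_{s_k}(\omega_N^{(k)}) - \delta_N(A)^{-s_k} + 2N(\delta_N(A)+\epsilon/2)^{-s_k}$: we removed a point whose total point-energy was $\ge \delta_N(A)^{-s_k}$ and added a point each of whose $N-1$ new distances exceeds $\delta_N(A)+\epsilon/2$, contributing at most $2(N-1)(\delta_N(A)+\epsilon/2)^{-s_k}$ when both directions of each ordered pair are counted. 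Since $(\delta_N(A)+\epsilon/2)/\delta_N(A)>1$, the ratio $2N\bigl((\delta_N(A))/(\delta_N(A)+\epsilon/2)\bigr)^{s_k}\to 0$, so for $k$ large the net change is strictly negative, giving $E_{s_k}(\widetilde\omega_N^{(k)})<E_{s_k}(\omega_N^{(k)})=\mathcal E_{s_k}(A,N)$ — a contradiction. (One should also check $\widetilde\omega_N^{(k)}$ has $N$ distinct points; since $a_k$ is at positive distance from all remaining points this is immediate.)

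The main obstacle is the bookkeeping in the energy comparison: one must be careful that the deleted point genuinely carries energy $\ge\delta_N(A)^{-s_k}$, which requires the observation that some pair in $\omega_N^{(k)}$ is at distance $\le\delta_N(A)$ (true since $\delta(\omega_N^{(k)})\le\delta_N(A)$ by the definition of best-packing distance), and that after deletion the maximal-energy point still accounts for at least the single-edge contribution of that short pair. The rest is the elementary fact that $t^{s}\to 0$ when $0<t<1$, applied to $t=\delta_N(A)/(\delta_N(A)+\epsilon/2)$. No compactness beyond extracting the convergent subsequence defining the cluster point is needed, and the argument works verbatim on any compact metric space, which is why it generalizes Theorem~7 of \cite{HSW}.
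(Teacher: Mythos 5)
Your proof is correct and uses essentially the same exchange argument as the paper: locate a point of the $s$-energy minimizer whose point-energy is at least $\delta(\omega_{N,s})^{-s}$ and swap it into the deep hole, contradicting minimality for large $s$. The only difference is organizational --- the paper proves the quantitative bound $\eta(\omega_{N,s},A)\le N^{2/s}\delta(\omega_{N,s})$ for every $s$ and then lets $s\to\infty$ via Proposition~\ref{prop2}, whereas you argue by contradiction along the convergent subsequence; both rest on the same comparison of $\delta_N(A)^{-s}$ against $(\delta_N(A)+\epsilon/2)^{-s}$.
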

The upper bound for $\gamma(\omega_N,A)$ in this theorem can be attained even for the case
when $A$ is a sphere and $\rho$ is the Euclidean metric.  For $N=11$ on $S^2$, equality follows from the uniqueness result for best-packing of B\"or\"oczky   \cite{B2}.  For $N=5$ on $S^2$, it follows from Theorem~\ref{Q_4} in Section~4.  
\begin{proof} Let $N\ge 2$ be fixed and, for $s>0$, let $\omega_{N,s}$
be an $N$-point $s$-energy minimizing configuration 
on $A$. Clearly, $E_s(\omega_{N,s})\ge \delta(\omega_{N,s})^{-s}$. This
implies that there exists a point $x_s\in \omega_{N,s}$ such that
$$
\sum_{y\in \omega_{N,s}\setminus\{x_s\}}\rho(x_s,y)^{-s}\ge N^{-1}E_s(\omega_{N,s})\ge N^{-1}\delta(\omega_{N,s})^{-s}.
$$
If
$ \eta(\omega_{N,s},A)> N^{2/s}\delta(\omega_{N,s}),$
then $E_s(\omega_{N,s}')< E_s(\omega_{N,s})$, where
$\omega_{N,s}':=\omega_{N,s}\cup\{a\}\setminus\{x_s\}$, and $a$ is a point of $A$ such
that $\rho(t,a)\ge\eta(\omega_{N,s},A)$, for all $t\in \omega_{N,s}$, which yields a contradiction.  
Hence,
\begin{equation}
\label{mesh} \eta(\omega_{N,s},A)\le N^{2/s}\delta(\omega_{N,s}),
\end{equation}
and letting $s\to\infty$
in~\eqref{mesh} and using Proposition~\ref{prop2}, we obtain the statement of Theorem~\ref{Thm0.5}.
\end{proof}
\section{Lower bounds for the mesh-separation ratio on the sphere}

In this section we derive some lower bounds for the mesh-separation ratio of a best-packing $N$-point configuration on the unit sphere $S^n\subset \R^{n+1}$ with $\rho$ the Euclidean metric.
Let $\Delta_n$ and $\Theta_n$ be the sphere packing and covering
constants in $\R^n$, respectively:
\begin{equation}\label{Delta}
\Delta_n:=\lim_{N\to \infty}N({\delta_N(U_n) }/{2})^n\beta_n \ \ \text{ and }\ \  \Theta_n:=\lim_{N\to \infty} N\eta_N(U_n)^n \beta_n,
\end{equation}
where $U_n:=[0,1]^n$ denotes the unit cube in $\R^n$ and  $\beta_n$ denotes the volume of the unit ball in $\R^n$   
(see, e.g.~\cite{CS,K}).
First we  prove the following asymptotic result for best-packing configurations on  $S^n$.
\begin{theorem}\label{Thm0.75}
Let $\{\omega_N\}$ denote a sequence of $N$-point  best-packing configurations on $S^n$.
Then
\begin{equation}
\label{asymp}
\gamma(\omega_N,S^n)\ge\frac12\left(\frac{\Theta_n}{\Delta_n}\right)^{1/n}+o(1),\quad
N\to\infty.
\end{equation}
\end{theorem}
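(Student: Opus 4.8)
The plan is to reduce \eqref{asymp} to the known first-order asymptotics of the best-packing distance $\delta_N(S^n)$ and of the $N$-point mesh norm $\eta_N(S^n)$, and then take the quotient. Since each $\omega_N$ is a best-packing configuration we have $\delta(\omega_N)=\delta_N(S^n)$ exactly, while trivially $\eta(\omega_N,S^n)\ge\eta_N(S^n)$ (a best-packing configuration need not be near-optimal for covering, but that inequality only helps us). Hence
\[
\gamma(\omega_N,S^n)=\frac{\eta(\omega_N,S^n)}{\delta(\omega_N)}\ \ge\ \frac{\eta_N(S^n)}{\delta_N(S^n)},
\]
and it suffices to bound the right-hand side from below as $N\to\infty$.

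The two inputs I would use are the one-sided asymptotic estimates
\[
\delta_N(S^n)\ \le\ 2\Bigl(\tfrac{\Delta_n\,\mathcal H_n(S^n)}{N\beta_n}\Bigr)^{1/n}(1+o(1)),
\qquad
\eta_N(S^n)\ \ge\ \Bigl(\tfrac{\Theta_n\,\mathcal H_n(S^n)}{N\beta_n}\Bigr)^{1/n}(1+o(1)),
\]
where $\mathcal H_n(S^n)$ denotes the surface area of $S^n$. These are the `soft' halves of the full asymptotic formulas for $\delta_N$ and $\eta_N$ and encode the fact that, since $S^n$ is asymptotically Euclidean at small scales, a packing (resp.\ covering) of $S^n$ by small Euclidean balls cannot beat the optimal flat density $\Delta_n$ (resp.\ $\Theta_n$). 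To prove them one covers $S^n$ by finitely many coordinate patches on which the chordal metric is bi-Lipschitz to the flat metric with constants tending to $1$ as the patches shrink, distributes the $N$ points among the patches, transfers each patch's configuration to $\R^n$ and invokes the defining optimality of $\Delta_n$ (for the packing bound) and of $\Theta_n$ (for the covering bound) together with the cap-volume expansion $\mathcal H_n(B(x,r)\cap S^n)=\beta_n r^n(1+O(r^2))$, and finally sums, absorbing the points and balls that straddle patch boundaries into the $o(N)$ error. The relevant facts about $\Delta_n$ and $\Theta_n$ are classical (see, e.g., \cite{CS,K}); the transfer to $S^n$ is the standard localization argument, and one may alternatively simply quote the known limits of $N^{1/n}\delta_N(S^n)$ and $N^{1/n}\eta_N(S^n)$.

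Combining the three displays, the common factor $\bigl(\mathcal H_n(S^n)/(N\beta_n)\bigr)^{1/n}$ cancels and
\[
\gamma(\omega_N,S^n)\ \ge\ \frac{\bigl(\Theta_n\,\mathcal H_n(S^n)/(N\beta_n)\bigr)^{1/n}(1+o(1))}{2\bigl(\Delta_n\,\mathcal H_n(S^n)/(N\beta_n)\bigr)^{1/n}(1+o(1))}\ =\ \frac12\Bigl(\frac{\Theta_n}{\Delta_n}\Bigr)^{1/n}(1+o(1)),
\]
which is precisely \eqref{asymp} (note $\Theta_n/\Delta_n>1$ for $n\ge2$, so this genuinely improves the elementary bound $\gamma\ge1/2$ valid for connected $A$). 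The main obstacle is the localization step: one must check that the inefficiency introduced by the spherical curvature and by the patch boundaries is of strictly lower order in $N$, uniformly, so that $\Delta_n$ and $\Theta_n$, rather than the trivial density $1$, govern the leading behaviour of $\delta_N(S^n)$ and $\eta_N(S^n)$. Everything after that is routine arithmetic.
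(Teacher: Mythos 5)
Your argument is correct and is essentially the paper's own proof: the paper likewise lower-bounds $\eta(\omega_N,S^n)$ via the covering density $\Theta_n$ and upper-bounds $\delta(\omega_N)/2$ via the packing density $\Delta_n$ (each by the same ``standard projection/localization argument'' you sketch), then divides. Your intermediate reduction to the extremal quantities $\eta_N(S^n)$ and $\delta_N(S^n)$ is a harmless rephrasing, since $\delta(\omega_N)=\delta_N(S^n)$ for best-packing configurations and $\eta(\omega_N,S^n)\ge\eta_N(S^n)$ by definition.
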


\begin{proof}
 Since
the collection of spherical caps with centers in the points of $\omega_N$
of the radius $\eta(\omega_N,S^n)$ covers $S^n$, a standard
projection argument  implies
\begin{equation}
\label{cov}
N\beta_n\left(\eta(\omega_N,S^n)\right)^n\ge\Theta_n {\rm Area}(S^n)+o(1),\quad
N\to\infty.
\end{equation}
Similarly we have
\begin{equation}
\label{pack}
N\beta_n\left(\frac{\delta(\omega_N)}2\right)^n\le\Delta_n{\rm Area}(S^n)+o(1),\quad
N\to\infty.
\end{equation}
Thus, we obtain~\eqref{asymp} directly from~\eqref{cov}
and~\eqref{pack}.
\end{proof}

It is interesting to investigate the asymptotic behavior of the
constant on the right-hand side of~\eqref{asymp}  as
$n\to\infty$. The best known asymptotic upper bound for
$\Delta_n$ is the Kabatyanski-Levenshtein bound $\Delta_n\le
2^{-0.599n+o(n)}$ as $n\to\infty$ and the best known lower bound for the
covering constant is $\Theta_n\ge cn$, where $c$ is a positive
absolute constant (cf. \cite[pages 40 and 247]{CS}). Thus the inequality~\eqref{asymp} implies the
following: if $n$ is large enough and $N>C(n)$, then the inequality
$$
\gamma(\omega_N,S^n)\ge (1/2)2^{0.599}+o(1)\ge 0.757, \qquad n\to \infty,
$$
holds for an arbitrary best-packing configuration $\omega_N$ on
$S^n$. Further upper bounds for $\Delta_n$ and lower
bounds for $\Theta_n$ can be found in~\cite{CE} and~\cite{CS}. In particular, it is known that for $n=2$ the hexagonal
lattice provides both $\Delta_2=\pi/\sqrt{12}$, and
$\Theta_2=2\pi/\sqrt{27}$. Hence
$$
\gamma(\omega_N,S^2)\ge\frac{1}{\sqrt{3}}+o(1),\quad N\to\infty,
$$
for an arbitrary best packing configuration $\omega_N$ on $S^2$.
However, by special arguments working only for $n=2$ we are able to
improve this result to the following:
\begin{theorem}\label{Thm0.8}
Let $\{\omega_N\}$ denote a sequence of $N$-point  best-packing configurations on $S^2$.
Then
\begin{equation}
\label{asymp2} \gamma(\omega_N,S^2)\ge\frac
1{2\cos{\pi/5}}+o(1)=\frac{2}{1+\sqrt{5}}+o(1),\quad
N\to\infty.
\end{equation}
\end{theorem}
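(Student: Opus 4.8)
The plan is to derive the bound from a combinatorial (Euler‑characteristic) fact about the Voronoi decomposition of $\omega_N$ on $S^2$, together with an elementary planar estimate in which the regular pentagon — and hence the constant $2\cos(\pi/5)$, the ratio of its circumradius to its inradius — appears naturally. First I would dispose of a trivial case: since $\#\omega_N=N\to\infty$ and $S^2$ is compact, $\delta(\omega_N)=\delta_N(S^2)\to 0$; so if $\eta(\omega_N,S^2)$ does not tend to $0$ along some subsequence, then $\gamma(\omega_N,S^2)\to\infty$ there and nothing is to prove. Hence I may assume $\delta:=\delta(\omega_N)\to 0$ and $\eta:=\eta(\omega_N,S^2)\to 0$.

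Next I would introduce the Voronoi cells $V(x):=\{y\in S^2:\rho(y,x)\le\rho(y,x')\text{ for all }x'\in\omega_N\}$, $x\in\omega_N$, which tile $S^2$, and record two inclusions. On one hand $V(x)\subseteq B(x,\eta)$, since for $y\in V(x)$ the closest point of $\omega_N$ is $x$, which is at distance at most $\eta$ by definition of the mesh norm. On the other hand $B(x,\delta/2)\subseteq V(x)$, since $\rho(y,x)<\delta/2$ forces $\rho(y,x')\ge\rho(x,x')-\rho(y,x)>\delta/2>\rho(y,x)$ for every $x'\neq x$. For $N$ large the cells are small, so each $V(x)$ is a geodesically convex spherical polygon whose sides lie on the bisecting great circles of $x$ and its Delaunay neighbours; since $\rho(x,x')\ge\delta$ (hence the geodesic distance $\rho_g(x,x')\ge\delta$), each such bisecting great circle is at geodesic distance $\ge\delta/2$ from $x$.

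Then I would invoke Euler's formula for the Voronoi diagram, regarded as a graph embedded in $S^2$ with exactly $N$ faces, each face bounded by at least $3$ edges and each vertex of degree at least $3$: this forces the number of edges to be at most $3N-6$, so the total number of sides over all cells is at most $6N-12<6N$, and therefore some cell $V(x_0)$ has at most $5$ sides. Projecting a neighbourhood of $x_0$ to the tangent plane by the gnomonic projection — which maps great circles to straight lines, hence $V(x_0)$ to a genuine bounded convex polygon with at most $5$ edges, and distorts lengths near $x_0$ by a factor $1+o(1)$ since everything involved lies within distance $O(\eta)=o(1)$ of $x_0$ — I reduce to the following planar claim: if a bounded convex polygon $P\subset\R^2$ with at most $5$ edges has every edge‑line at distance $\ge d$ from an interior point $O$, then $P$ contains a point at distance $\ge d\sec(\pi/5)$ from $O$. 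For this, note the outward unit edge‑normals $n_1,\dots,n_m$ ($m\le5$), taken in cyclic order, have angular gaps summing to $2\pi$, so some gap $\beta=\angle(n_k,n_{k+1})\ge 2\pi/5$; the vertex $v$ shared by those two edges satisfies $\langle v,n_k\rangle\ge d$ and $\langle v,n_{k+1}\rangle\ge d$, whence $\angle(v,n_k)\le\arccos(d/|v|)$ and $\angle(v,n_{k+1})\le\arccos(d/|v|)$, and the triangle inequality for angles gives $\beta\le 2\arccos(d/|v|)$, i.e. $|v|\ge d/\cos(\beta/2)\ge d\sec(\pi/5)$. Applying this with $O$ the image of $x_0$, $d=(\delta/2)(1+o(1))$, and using $P\subseteq B(O,\eta(1+o(1)))$, I get $\eta\ge(\delta/2)\sec(\pi/5)(1+o(1))$, that is $\gamma(\omega_N,S^2)=\eta/\delta\ge\frac{1}{2}\sec(\pi/5)+o(1)=\frac{1}{2\cos(\pi/5)}+o(1)=\frac{2}{1+\sqrt5}+o(1)$.

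The main obstacle is the spherical‑to‑planar bookkeeping: one must check that for $N$ large the Voronoi diagram of a best‑packing configuration is a bona fide cell complex (no two‑sided cells, every Voronoi vertex of degree at least $3$) so that Euler's formula applies, and that all the metric distortions — in $B(x_0,\delta/2)\subseteq V(x_0)\subseteq B(x_0,\eta)$ and under the gnomonic image — are uniformly $1+o(1)$; both follow from $\delta,\eta=o(1)$ together with the uniform separation and density of $\omega_N$, but they are what the phrase ``special arguments working only for $n=2$'' is really paying for. By contrast the two conceptual steps — Euler's bound producing a cell with at most $5$ sides, and the angular‑gap pigeonhole forcing a far vertex — are short, and it is precisely here that $2\cos(\pi/5)$ enters.
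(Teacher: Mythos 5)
Your proposal is correct and follows essentially the same route as the paper: reduce to the case $\eta(\omega_N,S^2)\to 0$, take the Voronoi decomposition, use Euler's formula to find a cell with at most five sides, and then pigeonhole an angle of size at least $2\pi/5$ to force a vertex of that cell at distance at least $(\delta(\omega_N)/2)\sec(\pi/5)(1+o(1))$ from its center. The only (harmless) variation is that you pigeonhole on the gaps between consecutive outward edge normals (exterior angles) rather than, as the paper does, on the angles subtended at the cell's center by consecutive vertices; both sum to $2\pi$ and yield the same constant, and your planar lemma supplies details the paper leaves to a ``projection argument.''
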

\begin{proof} It suffices to only consider sequences such that $\gamma(\omega_N,S^2)=O(1)$ as $N\to \infty$. 
For  a fixed $N\ge 4$,
consider the Voronoi decomposition of $S^2$ generated by $\omega_N$, with $X_i$ denoting the cell
associated with $x_i$; that is, 
$$
X_i:=\{v\in S^2\, | \,|v-x_i|=\min_{x\in\omega_N}|v-x|\}.
$$
  Euler's formula for convex polyhedra implies that there is a cell $X_j$ having at
most 5 edges (each cell is a spherical polygon with edges consisting
of arcs of great circles), see~\cite{HS}.  
   Since  $$B(x_i,\delta(\omega_N)/2)\cap S^2\subset X_i\subset B(x_i,\eta(\omega_N,S^2)),$$ and 
   $\eta(\omega_N,S^2)=O(\delta(\omega_N))$, it follows by a projection argument 
  that    there is at least one interior angle from $x_j$ to consecutive vertices of $X_j$ with angle $2\pi/5+o(1)$, 
  and hence the distance from $x_j$ to some  vertex of $X_j$ is
at least $$
\frac{\delta(\omega_N)}{2\cos{\pi/5}}+o(\delta(\omega_N)),\quad N\to\infty.
$$
This yields~\eqref{asymp2}.

\end{proof}

\section{Limit of minimal energy for 5 points on $S^2$}

\begin{figure}[htbp] 
    \centering
    \includegraphics[width=1.9in]{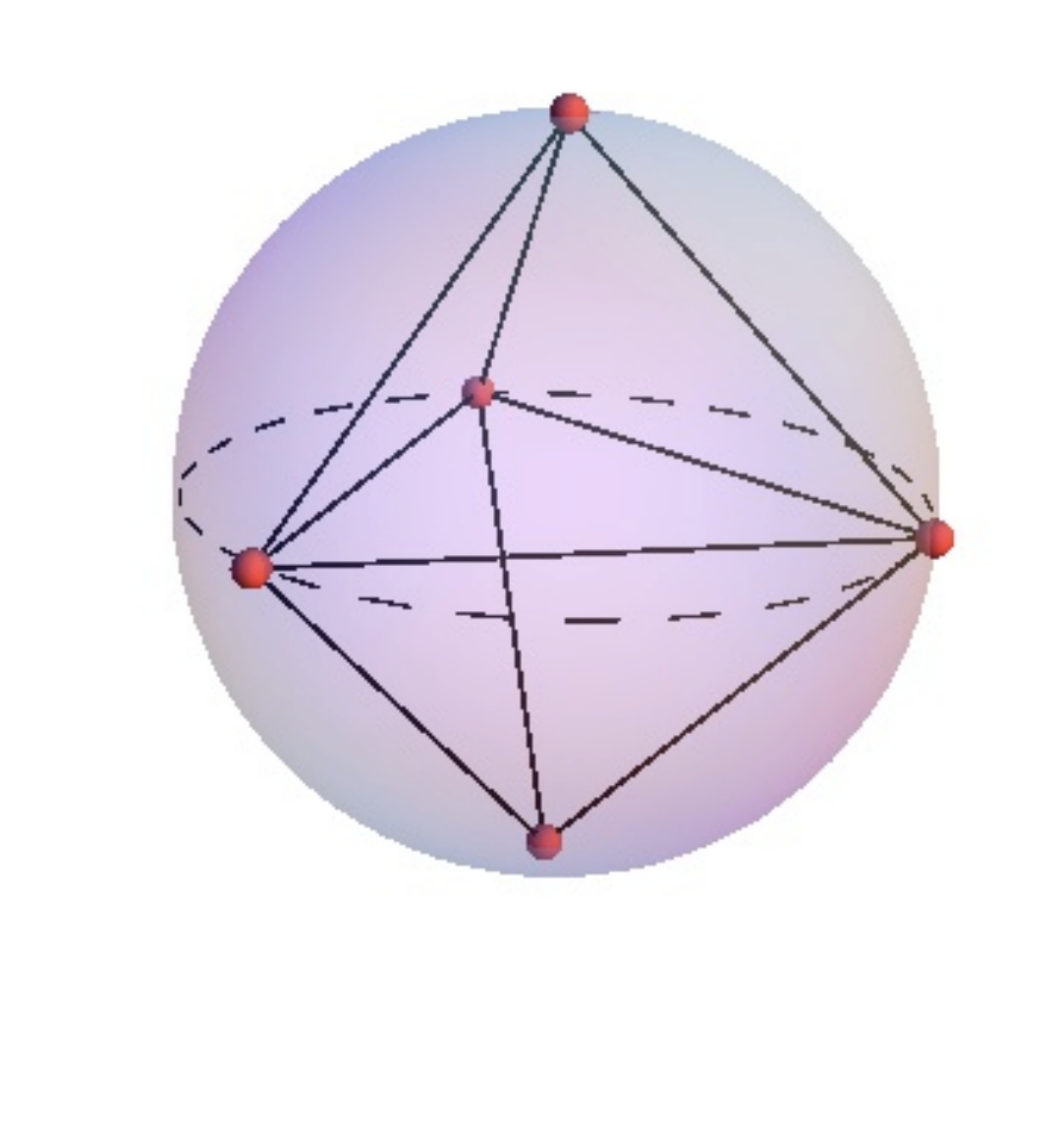}  \includegraphics[width=1.9in]{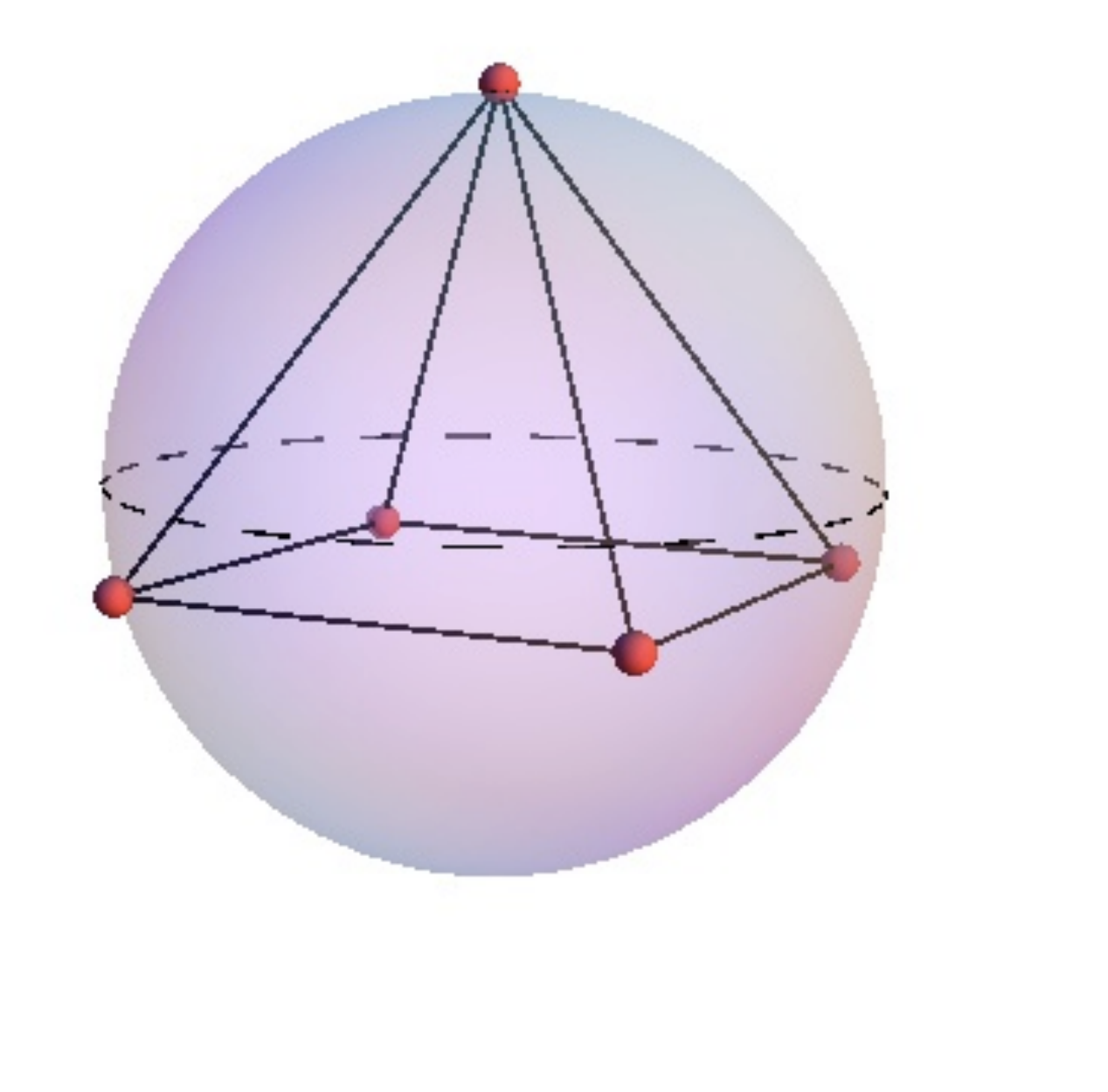} \includegraphics[width=1.9in]{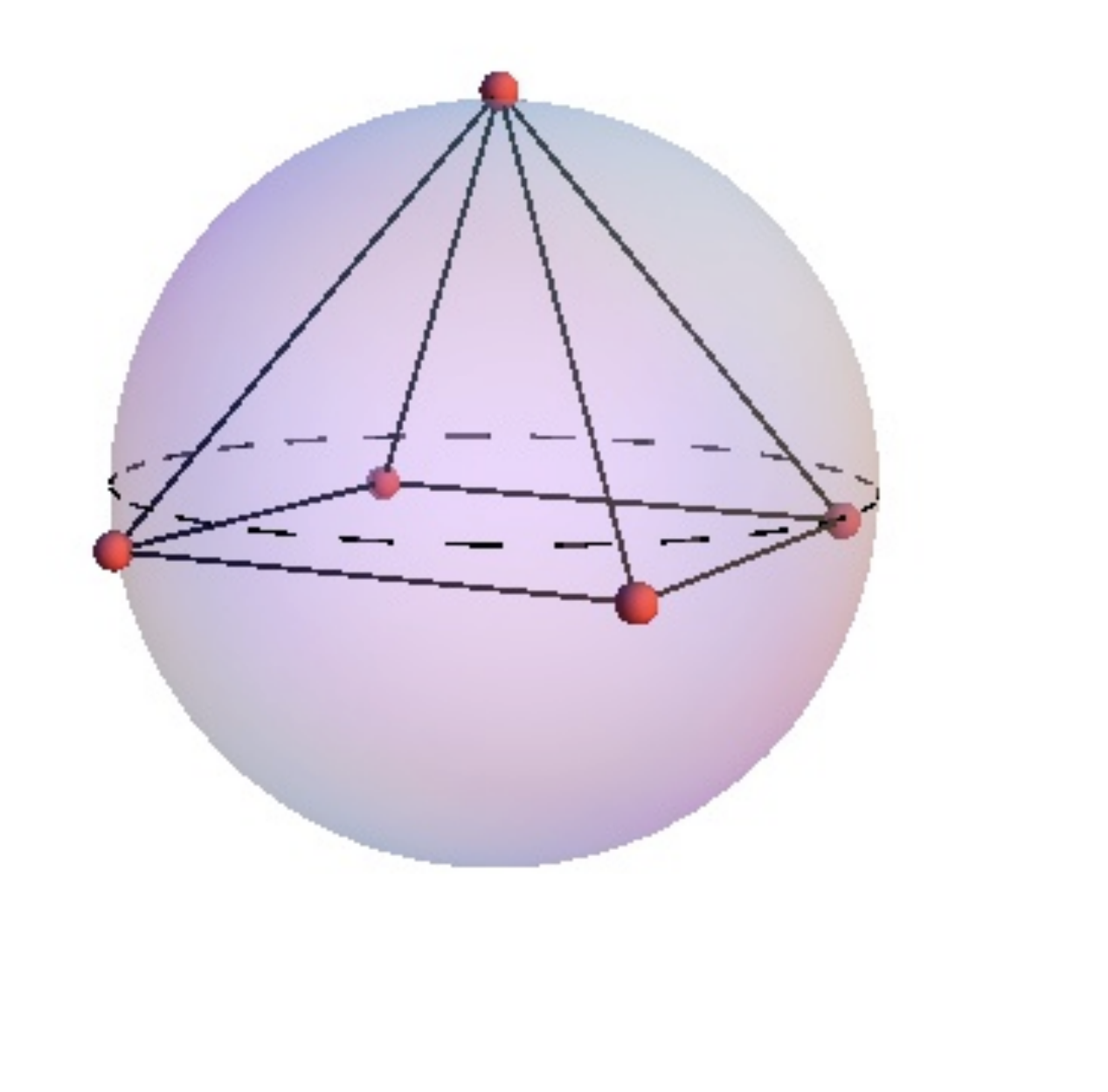}
    
    \vspace{-.1in}
    \hspace*{-.1in} (a)   \hspace{1.3in} (b)   \hspace{1.6in} (c)  
       \caption{`Optimal' 5-point configurations on ${\mathbb S}^2$: (a)  bipyramid BP,  (b) optimal square-base pyramid  SBP(1) , (c) optimal square-base pyramid   SBP(16).}
       \label{fivepointFigs}
 \end{figure}
 
It was observed in \cite{MKS} from numerical experiments that 5-point minimum Riesz $s$-energy configurations on $S^2$ with the Euclidean metric appear to depend on $s$ and  to be of two general types:   (i) the bipyramid (BP) consisting of 2 antipodal points and 3 equally spaced points on the associated equator, and   (ii) the square-base pyramid (SBP$(s)$) with one vertex at the north pole and 4 vertices of the same latitude depending on $s$ and forming a square (see Figure~\ref{fivepointFigs}).   
A comparison of the $s$-energy for the BP and the SBP$(s)$ configurations is given in Figure~\ref{fivepointratioFig} and suggests (as in \cite{MKS}) that   BP is optimal for $s<s^*\approx 15.04808$, while SBP$(s)$ is optimal for $s>s^*$.

R. Schwartz \cite {Schw} using a mathematically rigorous computer-aided solution proved (in a manuscript of 67 pages)  that, for $N=5$, BP is the unique  minimizer of the Riesz $s$-energy for $s=1$ and $s=2$.  (For the logarithmic energy, the optimality of BP is established in \cite{DLT}.)  Currently there are no other values of $s>0$ for which a rigorous optimality proof is known.   Regarding the stability  of BP and SBP($s$), in Figure~\ref{fivepointstabilityFig} we plot the minimum eigenvalue of the Hessian of their $s$-energies.  These graphs suggest that BP is not a local minimizing configuration for $s>21.148$ (also observed by H. Cohn), while SBP($s$) is not a  local minimizing configuration for $s<13.5204$.

\begin{figure}[htbp] 
    \centering
    \includegraphics[width=5in]{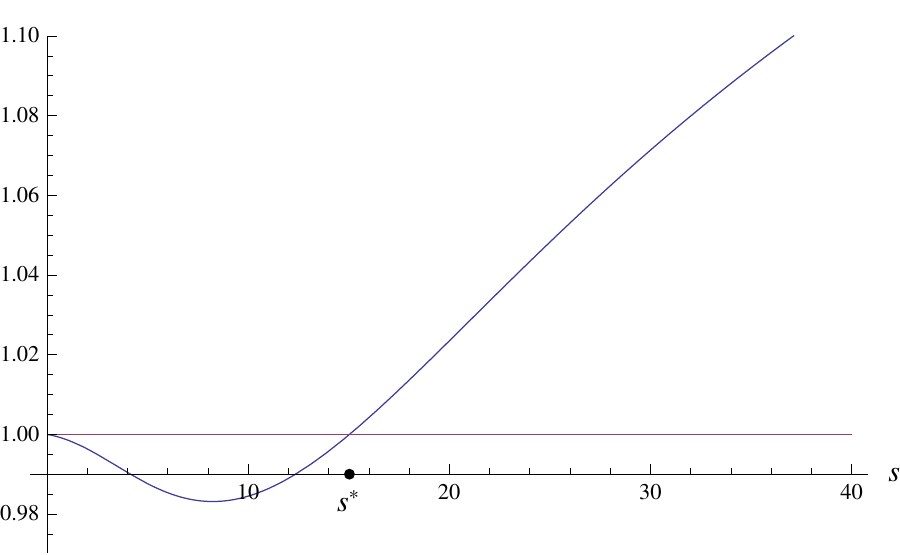}             \caption{The ratio of the $s$-energy of BP to the $s$-energy of SBP($s$). }
       \label{fivepointratioFig}
 \end{figure}
 
 \begin{figure}[htbp] 
    \centering
    \includegraphics[width=2.5in]{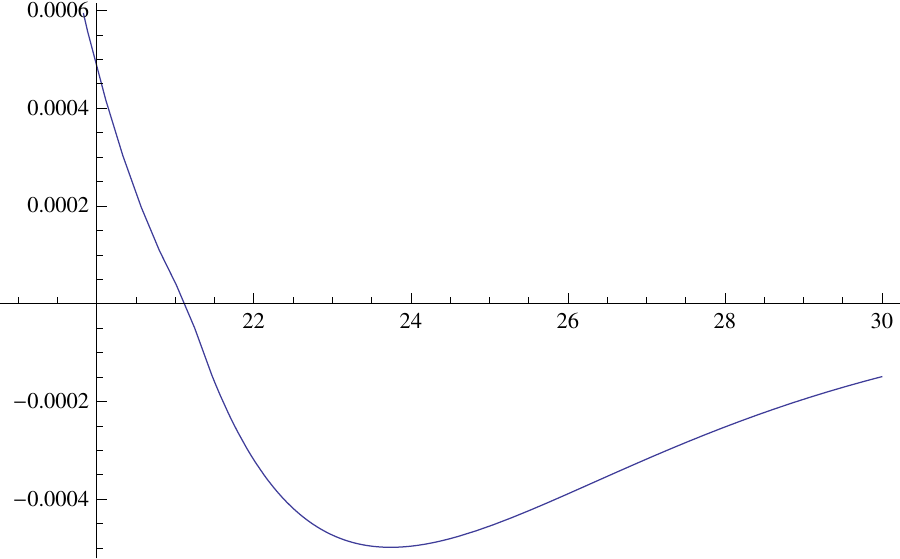}     \includegraphics[width=2.5in]{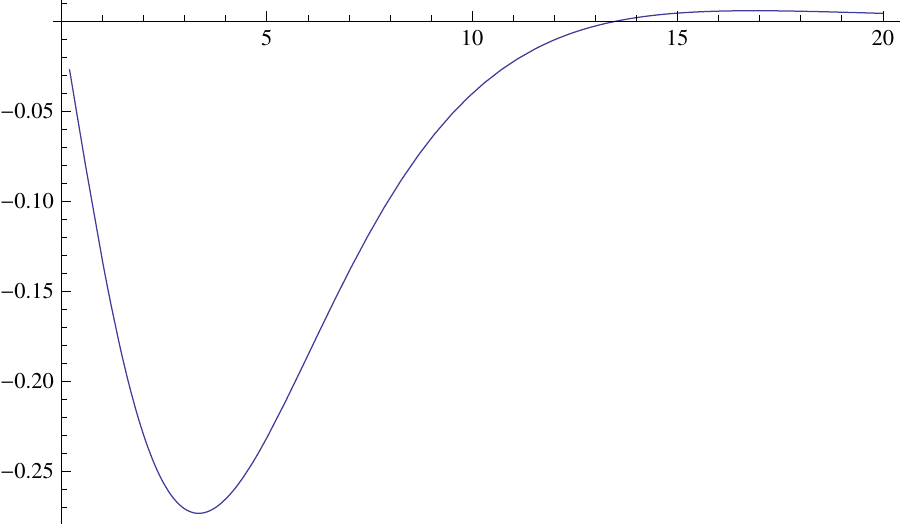}      
    
    (a) \hspace{2.5in} (b)
          \caption{The minimum eigenvalue of the Hessian for the $s$-energy of (a) the BP configuration and (b) the SBP($s$) configuration. }
       \label{fivepointstabilityFig}
 \end{figure}

According to Proposition~\ref{prop2}, every cluster point of $s$-energy minimizing configurations as $s\to \infty$ is a best-packing configuration.  However, as is known, there are infinitely many non-isometric 5-point best-packing configurations on $S^2$ (see e.g. \cite{KB}).   
\begin{proposition}
\label{l1} $\delta_5(S^2)=\sqrt{2}$ and all 5-point best-packing configurations on $S^2$ consist
of two antipodal points (poles) and a triangle on the equator having
all angles greater than or equal to $\pi/4$.
\end{proposition}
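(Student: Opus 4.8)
The plan is to pass to inner products. Identify each point of $S^2\subset\R^3$ with a unit vector; then $|x_i-x_j|\ge\sqrt2$ is equivalent to $\langle x_i,x_j\rangle\le0$, the case $|x_i-x_j|=\sqrt2$ corresponds to $\langle x_i,x_j\rangle=0$, and $x_i=-x_j$ to $\langle x_i,x_j\rangle=-1$. The single planar fact I will need is: \emph{four nonzero vectors in $\R^2$ with pairwise nonpositive inner products split into two pairs, the vectors in each pair pointing in opposite directions and the two pairs lying on orthogonal lines.} This is immediate upon ordering the four directions cyclically: any two of the four make an angle $\ge\pi/2$, so each of the four consecutive angular gaps is $\ge\pi/2$; since the gaps sum to $2\pi$, each must equal $\pi/2$.

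To get $\delta_5(S^2)=\sqrt2$ I would first show $\delta_5(S^2)\le\sqrt2$. Let $v_1,\dots,v_5$ be any five points on $S^2$ and project $v_2,\dots,v_5$ onto the plane $v_1^\perp$, putting $w_i:=v_i-\langle v_i,v_1\rangle v_1$; then $\langle w_i,w_j\rangle=\langle v_i,v_j\rangle-\langle v_i,v_1\rangle\langle v_j,v_1\rangle$ for $i\ne j$. If every $\langle v_i,v_j\rangle$ were negative, then each $w_i$ would be nonzero (if $w_i=0$ then $v_i=-v_1$, whence $\langle v_i,v_j\rangle=-\langle v_1,v_j\rangle>0$ for $j\ne 1,i$, a contradiction), and $w_2,\dots,w_5$ would be four nonzero vectors in $\R^2$ with pairwise negative inner products; but the planar fact makes two of them lie on orthogonal lines, so their inner product is $0$, not negative --- a contradiction. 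Hence some pair has $\langle v_i,v_j\rangle\ge0$, i.e.\ $|v_i-v_j|\le\sqrt2$. Since the bipyramid (two antipodal points together with an equilateral triangle on the associated equator) has minimum distance exactly $\sqrt2$, we conclude $\delta_5(S^2)=\sqrt2$.

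For the characterization, let $\omega_5=\{v_1,\dots,v_5\}$ be best-packing, so $\langle v_i,v_j\rangle\le0$ for all $i\ne j$. The key step is to exhibit an antipodal pair. Project onto $v_1^\perp$ as above; the formula for $\langle w_i,w_j\rangle$ (nonpositive minus nonnegative) shows $w_2,\dots,w_5$ have pairwise nonpositive inner products. If some $w_i=0$ then $v_i=-v_1$ and we are done; otherwise the planar fact applies, and after relabelling the indices $2,3,4,5$, the vector $w_3$ is a negative scalar multiple of $w_2$, $w_5$ a negative scalar multiple of $w_4$, and $w_2\perp w_4$, so $\langle w_i,w_j\rangle=0$ for all $i\in\{2,3\}$, $j\in\{4,5\}$. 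Each such identity says $\langle v_i,v_j\rangle=\langle v_i,v_1\rangle\langle v_j,v_1\rangle$; the left side is $\le0$ and the right side $\ge0$, so $\langle v_i,v_j\rangle=0$. Thus $v_4$ and $v_5$ are each orthogonal to both $v_2$ and $v_3$: if $v_2,v_3$ are linearly independent this confines $v_4,v_5$ to a single line through the origin, forcing $v_4=-v_5$ (distinct unit vectors), and if $v_2,v_3$ are dependent then $v_3=-v_2$. Either way $\omega_5$ contains an antipodal pair, which we take to be $\{v_4,v_5\}$.

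It remains to pin down the other three points. For $i\in\{1,2,3\}$ we have $\langle v_i,v_5\rangle\le0$ and $\langle v_i,v_4\rangle=-\langle v_i,v_5\rangle\le0$, which force $\langle v_i,v_5\rangle=0$; hence $v_1,v_2,v_3$ lie on the equator $v_5^\perp\cap S^2$ and are pairwise at angular distance $\ge\pi/2$ on that great circle. Letting $\alpha,\beta,\gamma$ be the three arcs they cut out, $\alpha+\beta+\gamma=2\pi$, the distance condition gives $\alpha,\beta,\gamma\in[\pi/2,\pi]$ (each is $\ge\pi/2$, and if one exceeded $\pi$ the other two would sum to less than $\pi$), and by the inscribed-angle theorem the angles of the planar triangle $v_1v_2v_3$ equal $\alpha/2,\beta/2,\gamma/2$, each therefore $\ge\pi/4$. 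This is the asserted form; the converse (any configuration of this shape has all mutual distances $\ge\sqrt2$, with equality on the pole-to-equator pairs, hence is best-packing) is immediate and not needed for the statement. The one genuinely substantive point is the production of the antipodal pair; everything else is bookkeeping with inner products and elementary spherical geometry.
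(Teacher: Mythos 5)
Your argument is correct and complete. Note, however, that the paper does not actually prove Proposition~\ref{l1}: it is stated as a known classical fact (the solution of the Tammes/best-packing problem for five points), with the reader directed to B\"or\"oczky's book \cite{KB}, so there is no in-paper proof to compare against — what you have supplied is a self-contained elementary derivation that the paper omits. Both load-bearing steps check out: (i) the planar lemma, that four nonzero vectors in $\R^2$ with pairwise nonpositive inner products form two antipodal pairs on orthogonal lines, since the four consecutive angular gaps are each at least $\pi/2$ and sum to $2\pi$; and (ii) the projection identity $\langle w_i,w_j\rangle=\langle v_i,v_j\rangle-\langle v_i,v_1\rangle\langle v_j,v_1\rangle$, which correctly transfers the sign information from $S^2$ to the plane $v_1^\perp$, with the degenerate case $w_i=0$ properly handled (it forces $v_i=-v_1$). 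The extraction of an antipodal pair from the forced orthogonalities $\langle v_i,v_j\rangle=0$ for $i\in\{2,3\}$, $j\in\{4,5\}$, and the concluding arc-count plus inscribed-angle computation placing every angle of the equatorial triangle in $[\pi/4,\pi/2]$, are both sound. Your argument in fact yields slightly more than the statement asks for (the upper bound $\pi/2$ on the angles as well), and together with the immediate converse you mention it recovers the full classification, including the infinitude of non-isometric optimizers noted in the paragraph preceding the proposition.
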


It appears from Figure~\ref{fivepointratioFig} that the unique (up to isometry) cluster point of 5-point $s$-energy minimizing configurations is  SBP($\infty$); that is, the square base pyramid with base on the equator.   We shall next provide a rigorous proof that this is indeed the case.

\begin {theorem}\label {Q_4}
Let $Q'$ be a cluster point of a family of $5$-point $s$-energy minimizing configurations on $S^2$ as $s\to \infty$. Then $Q'$ is isometric to
\begin {equation}\label {Qp}
Q={\rm SBP}(\infty):=\{{\rm e}_1,-{\rm e}_1,{\rm e}_2,{\rm e}_3,-{\rm e}_3\},
\end {equation}
where ${\rm e}_1=(1,0,0)$, ${\rm e}_2=(0,1,0)$, and ${\rm e}_3=(0,0,1)$.
\end {theorem}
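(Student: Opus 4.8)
The plan is to combine the classification of $5$-point best-packing configurations with a sharp asymptotic analysis of the Riesz $s$-energy near such a configuration as $s\to\infty$. By Proposition~\ref{prop2}, $Q'$ is a $5$-point best-packing configuration on $S^2$, so by Proposition~\ref{l1} it is isometric to a configuration $P=\{p,-p,v_1,v_2,v_3\}$ in which $v_1,v_2,v_3$ lie on the great circle $E\subset S^2$ orthogonal to $p$ and the three arcs of $E$ they determine all belong to $[\pi/2,\pi]$ (equivalently, $|v_i-v_j|\ge\sqrt2$ for $i\ne j$). Since $\delta_5(S^2)=\sqrt2$, a short check shows that $P$ is isometric to $Q$ exactly when two of $v_1,v_2,v_3$ are antipodal --- equivalently, the largest of the three equatorial arcs equals $\pi$, equivalently $P$ contains a square inscribed in a great circle. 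It therefore suffices to rule out every best-packer in which no two of $v_1,v_2,v_3$ are antipodal.

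To fix the target constant I would first compute the energy of the square-base pyramids $\mathrm{SBP}(s)$: apex at one pole of a circle, four equally spaced base points on that circle, the circle displaced from the equator away from the apex by a small angle $\psi$. Its three distinct distances are $\sqrt{2(1+\sin\psi)}$ (apex--base, $4$ pairs), $\sqrt2\cos\psi$ (adjacent base, $4$ pairs) and $2\cos\psi$ (base diagonal, $2$ pairs). Minimizing over $\psi$ forces $\psi=\psi(s)\to0$ with $s\psi(s)\to\infty$ and $s\psi(s)^2\to0$, so that $2^{s/2}E_s(\mathrm{SBP}(s))\to 8$; hence $2^{s/2}\mathcal E_s(S^2,5)\le 8+o(1)$. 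Since $\mathrm{SBP}(\infty)=Q$, the theorem reduces to showing that, for every best-packer $P\not\cong Q$, configurations close to $P$ cannot attain this value.

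For the core estimate, fix a best-packer $P=\{p,-p,v_1,v_2,v_3\}$ and a configuration $\omega=\{p',q',z_1,z_2,z_3\}$ with $p'\to p$, $q'\to-p$, $z_i\to v_i$, and write $\rho(p',z_i)=\sqrt2(1+a_i)$, $\rho(q',z_i)=\sqrt2(1+b_i)$. In the scaling $2^{s/2}E_s(\omega)$ only pairs with distance tending to $\sqrt2$ survive: the six pole--equator pairs, contributing $2\sum_i\big((1+a_i)^{-s}+(1+b_i)^{-s}\big)+o(1)$, together with any equatorial pair having $|v_i-v_j|=\sqrt2$; boundedness of the energy forces $a_i,b_i\ge -C/s$. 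Near-antipodality of $p',q'$ gives, via $\rho(p',z_i)^2+\rho(q',z_i)^2=4-2(p'+q')\!\cdot\! z_i$, the identity $a_i+b_i=-\tfrac12(p'+q')\!\cdot\! v_i+O(\varepsilon^2)$ with $\varepsilon=\operatorname{dist}(\omega,P)$; thus the single displacement vector $w:=p'+q'$ governs to first order which of the three ``longitude classes'' of pole--equator pairs can be pushed beyond $\sqrt2$, namely those $i$ with $w\!\cdot\! v_i<0$. Here the geometry is decisive: because $v_1,v_2,v_3$ are coplanar, $w$ cannot be orthogonal to two of them while having negative inner product with the third --- \emph{unless} two of the $v_i$ are antipodal, since then orthogonality to one of the antipodal pair is orthogonality to both. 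Consequently, when $P\not\cong Q$, pushing one class of pole--equator pairs away from $\sqrt2$ forces, to first order, another class below $\sqrt2$ (contributing $\exp(\Omega(s\varepsilon))$, hence exceeding $8$ unless $\varepsilon=O(1/s)$), while in the regime $\varepsilon=O(1/s)$ a finer estimate involving the remaining pole--equator pairs and the equatorial $\sqrt2$-pair keeps $\liminf_s 2^{s/2}E_s(\omega)$ strictly above $8$. For $P\cong Q$ the antipodal pair $\{v_2,v_3\}$ is exactly the escape valve: tilting $p',q'$ and $v_2,v_3$ towards $-v_1$ enlarges the two pole--equator pairs at $v_1$ and the two equatorial pairs at $v_1$ while disturbing the remaining four pole--equator pairs (the edges of the great square) only at second order, recovering $8$ in agreement with $\mathrm{SBP}(s)$.

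Granting the core estimate, the theorem follows: if $Q'\not\cong Q$ then for all large $s$ the minimizer $\omega_{5,s}$ lies in the small neighborhood of $Q'$ where $2^{s/2}E_s(\omega)>8$, contradicting $2^{s/2}\mathcal E_s(S^2,5)\le 8+o(1)$; hence $Q'\cong Q$. The main obstacle is precisely this strict lower bound: one must show $\liminf_s 2^{s/2}E_s(\omega)>8$ for all $\omega$ in a fixed small neighborhood of a best-packer without an antipodal pair, uniformly over all such best-packers --- in particular over those whose largest arc is arbitrarily close to $\pi$, which are geometrically close to the exceptional configuration $Q$. This forces one to handle configurations approaching $P$ slowly (where the first-order comparison is inconclusive) and to carry out carefully the bookkeeping of how the six pole--equator distances and the possible equatorial $\sqrt2$-distance can be moved simultaneously; that coupling is the technical heart of the argument.
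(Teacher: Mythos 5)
Your reduction is exactly the paper's: by Proposition~\ref{prop2} and Proposition~\ref{l1} one must only rule out ``acute'' best-packers (two poles plus an equatorial triangle with no antipodal pair, i.e.\ an acute triangle), and your computation of $2^{s/2}E_s(\mathrm{SBP}(\psi))\to 8$ under $s\psi\to\infty$, $s\psi^2\to 0$ is precisely the paper's Lemma~\ref{S_8} (with $t=\sin\psi=s^{-2/3}$). You have also correctly located the geometric obstruction: lengthening a pole-to-equator distance while holding the other two at $\sqrt2$ is possible only when the other two equatorial points are antipodal. But the core lower bound --- that every configuration near a fixed acute best-packer has $2^{s/2}E_s>8+\theta$ for some $\theta>0$ --- is not proved; you explicitly defer it as ``the technical heart,'' and that deferred step is essentially the entire content of the theorem. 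Moreover, the framework you propose for it has concrete defects. (i) Working with the single vector $w=p'+q'$ only controls the \emph{sums} $a_i+b_i$; it does not handle the scenario in which the two ``switched-off'' pairs sit at different equatorial points for the two poles (e.g.\ $a_1\gg 1/s$ and $b_2\gg 1/s$ with the other four distances $\approx\sqrt2$), which is not excluded by your coplanarity observation about $w$. The correct statement is per pole: $p'$ orthogonal (to within $o(1/s)$) to two non-antipodal $v_j$'s forces $p'\approx p$. (ii) The inference ``pushing one class out forces another class in by $\Omega(\varepsilon)$, contributing $\exp(\Omega(s\varepsilon))$, hence $\varepsilon=O(1/s)$'' does not follow: the amount of lengthening needed to switch off a pair is only $\gg 1/s$, the induced shortening is proportional to that amount (not to $\varepsilon$), and your $O(\varepsilon^2)$ error is uncontrolled against both of these scales. (iii) The residual regime is then waved away with ``a finer estimate keeps the liminf above $8$,'' which is unsubstantiated.

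What is missing is a quantitative, uniform-in-$s$ trade-off inequality, and this is exactly what the paper supplies. For each pole separately, writing $x$ for its horizontal displacement from the circumcenter $O$ of the (nearly) equatorial triangle and $\alpha,\beta$ for the angles from $O$ to the two nearest vertices, acuteness gives $\alpha+\beta\le\tau<\pi$ with $\tau$ depending only on the fixed limit configuration, hence $\cos\alpha+\cos\beta\ge 1+\cos\tau>0$; convexity of $t^{-s/2}$ then yields $E_1\ge 2(2-2h-\theta_1x)^{-s/2}+(2-2h+3x)^{-s/2}$, and the elementary Lemma~\ref{M} ($M(1-Ax)^{-s}+(1+Bx)^{-s}\ge M+\min\{1,AM/B\}$ on $[0,1/A)$, for all $s>0$) converts this into $E_1\ge(2+\theta_2)(2-2h)^{-s/2}$ with $\theta_2>0$ independent of $s$ and of the size of the displacement --- no case split on $\varepsilon$ versus $1/s$ is needed. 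Combining the two poles by convexity in $h$ gives $E_s>(8+4\theta_2)2^{-s/2}$, contradicting the upper bound. Note also that your worry about uniformity ``over all best-packers whose largest arc is close to $\pi$'' is unnecessary: a cluster point $Q'$ is a single fixed configuration, so it suffices to derive the contradiction in a neighborhood of that one acute configuration, with constants ($\tau$, $\theta_1$, $\theta_2$) depending on it.
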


 It is perhaps surprising that this configuration has the maximum number of common pairwise distances (eight) of length $\sqrt{2}$ among all $5$-point best-packings.

We start the proof with an upper estimate for the minimum 5-point $s$-energy on $S^2$.  
\begin {lemma}\label {S_8}
$$
\limsup\limits_{s\to\infty}{2^{s/2}\mathcal E_s(S^2,5)}\leq 8.
$$
\end {lemma}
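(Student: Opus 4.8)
The plan is to exhibit a single explicit $5$-point configuration on $S^2$ whose Riesz $s$-energy, when multiplied by $2^{s/2}$, converges to $8$ as $s\to\infty$, which immediately gives the $\limsup$ bound since $\mathcal E_s(S^2,5)$ is the infimum over all configurations. The obvious candidate is $Q={\rm SBP}(\infty)=\{{\rm e}_1,-{\rm e}_1,{\rm e}_2,{\rm e}_3,-{\rm e}_3\}$ itself, the conjectured limiting configuration. First I would tabulate all $\binom{5}{2}=10$ pairwise Euclidean distances: the pairs $\{{\rm e}_1,-{\rm e}_1\}$ and $\{{\rm e}_3,-{\rm e}_3\}$ are antipodal and hence at distance $2$, while the remaining eight pairs — ${\rm e}_2$ against each of ${\rm e}_1,-{\rm e}_1,{\rm e}_3,-{\rm e}_3$, and ${\rm e}_1,-{\rm e}_1$ each against ${\rm e}_3,-{\rm e}_3$ — are pairs of orthogonal unit vectors, hence at distance $\sqrt{2}$.

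Then the energy is
\begin{equation}
E_s(Q)=8\cdot\frac{1}{(\sqrt2)^s}+2\cdot\frac{1}{2^s}=\frac{8}{2^{s/2}}+\frac{2}{2^s},
\end{equation}
so that
\begin{equation}
2^{s/2}E_s(Q)=8+\frac{2}{2^{s/2}}\longrightarrow 8,\qquad s\to\infty.
\end{equation}
Since $\mathcal E_s(S^2,5)\le E_s(Q)$ for every $s>0$, taking $\limsup$ as $s\to\infty$ yields $\limsup_{s\to\infty}2^{s/2}\mathcal E_s(S^2,5)\le 8$, which is the claim.

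In this argument there is essentially no obstacle: the only thing to be careful about is correctly counting the eight short pairs and the two long pairs, i.e. verifying that exactly two of the ten pairs are antipodal and the other eight consist of mutually orthogonal basis vectors. (One could alternatively use any square-base pyramid ${\rm SBP}(s)$ with base on the equator, but the degenerate choice $Q$ makes the distance bookkeeping trivial and already produces the sharp constant $8$, matching the lower bound that the subsequent analysis must establish.)
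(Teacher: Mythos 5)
Your approach has a genuine gap, and it is not just a bookkeeping slip. The paper defines the Riesz energy as a sum over \emph{ordered} pairs, $E_s(\omega_N)=\sum_{1\le i\ne j\le N}\rho(x_i,x_j)^{-s}$, so every unordered pair is counted twice. For your configuration $Q$ this gives
$2^{s/2}E_s(Q)=16+4\cdot 2^{-s/2}\to 16$, not $8$, so your computation only proves $\limsup_{s\to\infty}2^{s/2}\mathcal E_s(S^2,5)\le 16$. (You can check the convention against the paper's own formula for $E_s(Q_t)$, whose coefficients $4+8+8=20$ account for all ten unordered pairs twice; and against Theorem~\ref{last}, which shows the limit is exactly $8$ under this convention.)

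More importantly, no single fixed configuration can repair this. By Proposition~\ref{l1}, every $5$-point best-packing consists of two poles and an equatorial triangle, hence has at least six unordered pairs (the pole--equator pairs) at distance exactly $\sqrt2$, forcing $\liminf_s 2^{s/2}E_s\ge 12$ for any fixed best-packing (the bipyramid attains $12$); and any fixed non-best-packing has some pair at distance $\delta<\sqrt2$, so $2^{s/2}E_s\ge 2(\sqrt2/\delta)^s\to\infty$. The constant $8$ is only reachable with an $s$-\emph{dependent} family: the paper takes the square-base pyramid $Q_t$ with the apex at ${\rm e}_2$ and the square base lowered to the plane $y=-t$, $t=s^{-2/3}$. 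Then the four apex--base distances become $\sqrt{2(1+t)}>\sqrt2$ and their contribution $8(1+t)^{-s/2}\to 0$ because $st\to\infty$, while the four base edges shrink only to $\sqrt{2(1-t^2)}$ and still contribute $8(1-t^2)^{-s/2}\to 8$ because $st^2\to 0$. That balancing act is the essential idea your proposal is missing.
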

\begin {proof}
For arbitrary $0<t<1$, we define the following 5-point configuration on $S^2$:
\begin {equation}\label {2Q}
Q_t:=\{(\pm \sqrt {1-t^2},-t,0),(0,-t,\pm\sqrt {1-t^2}),{\rm e}_2\},
\end {equation}
which, for a suitable choice of $t$ (depending on $s$), is a conjectured  minimum energy configuration on $S^2$ for every $s$ large enough.
The $s$-energy of this configuration is given by
$$
E_s(Q_t):=4\cdot 2^{-s}(1-t^2)^{-s/2}+8\cdot 2^{-s/2}(1-t^2)^{-s/2}+8\cdot 2^{-s/2}(1+t)^{-s/2}.
$$
Letting now $t=s^{-2/3}$, we obtain that 
$$
\lim_{s\to \infty}(1-t^2)^{-s/2}=1\ \ \ {\rm and}\ \ \ \lim_{s\to \infty}{(1+t)^{-s/2}}=0,
$$
and so
$$
\limsup\limits_{s\to\infty}{2^{s/2}\mathcal E_s(S^2,5)}\leq \lim\limits_{s\to \infty}{2^{s/2}E_s(Q_{t})}
$$
$$
=\lim\limits_{s\to \infty} (4\cdot 2^{-s/2}(1-t^2)^{-s/2}+8(1-t^2)^{-s/2}+8(1+t)^{-s/2})=8.
$$
\end {proof}
We further need  the following  statement.
\begin {lemma}\label {M}
Let $A,B$, and $M$ be fixed positive constants. Then
$$
f(x):=M(1-Ax)^{-s}+(1+Bx)^{-s}\geq M+\min \{1,AM/B\}
$$
for every $x\in [0,1/A)$ and $s>0$.
\end {lemma}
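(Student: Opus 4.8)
The plan is to reduce the lemma to two elementary tangent-line estimates followed by a one-parameter case split. First I would note that the maps $t\mapsto(1-At)^{-s}$ and $t\mapsto(1+Bt)^{-s}$ are strictly convex on $(-\infty,1/A)$ and $(-1/B,\infty)$ respectively, since their second derivatives, $s(s+1)A^2(1-At)^{-s-2}$ and $s(s+1)B^2(1+Bt)^{-s-2}$, are positive for $s>0$. Comparing each function with its tangent line at $t=0$ gives, for all $x\in[0,1/A)$,
$$(1-Ax)^{-s}\ge 1+sAx\qquad\text{and}\qquad(1+Bx)^{-s}\ge 1-sBx,$$
and trivially $(1+Bx)^{-s}>0$ as well. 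Feeding the first of these into $f$ yields the working inequality $f(x)\ge M(1+sAx)+(1+Bx)^{-s}$ on $[0,1/A)$.

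Next I would split on the sign of $MA-B$. If $MA\ge B$, then $\min\{1,AM/B\}=1$, and using $(1+Bx)^{-s}\ge 1-sBx$ gives
$$f(x)\ge M+MsAx+1-sBx=M+1+sx(MA-B)\ge M+1,$$
since $x\ge0$. If instead $MA<B$, then $\min\{1,AM/B\}=AM/B$, and I would break $[0,1/A)$ at $x_0:=1/(sB)$. For $x\ge x_0$, using only $(1+Bx)^{-s}>0$ gives $f(x)\ge M+MsAx\ge M+MsAx_0=M+AM/B$. For $0\le x\le x_0$, using $(1+Bx)^{-s}\ge 1-sBx$ gives $f(x)\ge M+1+sx(MA-B)$, whose right-hand side is decreasing in $x$ (as $MA-B<0$) and hence is at least its value at $x=x_0$, namely $M+1+(MA-B)/B=M+AM/B$. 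Either way $f(x)\ge M+AM/B$, which finishes the argument.

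There is no serious obstacle here, but it is worth saying what makes the naive approach fail: $f$ is not monotone on $[0,1/A)$ — when $MA<B$ it first decreases and then increases — so the bound $f(x)\ge f(0)=M+1$ is false in general, and the true minimum must be located or bounded. One could do this by solving $f'(x)=0$, which gives the minimizer $x_*=(c-1)/(B+cA)$ with $c:=(B/(MA))^{1/(s+1)}$ and, after simplification, $f(x_*)=\frac{M}{B}\,\frac{(B+cA)^{s+1}}{(A+B)^{s}}$; this is $\ge M(A+B)/B=M+AM/B$ exactly because $c\ge1$. I prefer the tangent-line route above, since it is entirely elementary, sidesteps solving the critical-point equation, and still captures the minimum up to slack that disappears as $s\to\infty$.
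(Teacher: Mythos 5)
Your proof is correct, and it takes a genuinely different route from the paper's. The paper proves the lemma by locating the minimizer of $f$ exactly: it solves $f'(x)=0$ to find the critical point $x_1=\bigl((B/(AM))^{1/(s+1)}-1\bigr)/\bigl(B+A(B/(AM))^{1/(s+1)}\bigr)$ when $B>AM$ (and notes the minimum is at $0$ when $B\le AM$), then bounds $f(x_1)$ from below by discarding the first term's excess over $M$ and using $1+Bx_1\le (B/(AM))^{1/(s+1)}$ to get $f(x_1)\ge M+(B/(AM))^{-s/(s+1)}>M+AM/B$. You instead avoid the critical-point computation entirely by replacing each term with its tangent line at $0$ and splitting $[0,1/A)$ at $x_0=1/(sB)$; all the steps check out (including the observation that the sub-case $x\ge x_0$ may be vacuous without harm, and that the linear lower bound $M+1+sx(MA-B)$ is decreasing when $MA<B$). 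Your approach is more elementary and arguably cleaner; the paper's approach identifies the exact minimizer, which is mildly informative in its own right (it exhibits where near-equality occurs), and your closing remark in fact recovers and sharpens that computation by evaluating $f(x_*)$ in closed form. Either argument suffices for the application in Theorem~\ref{Q_4}, where only a bound uniform in $s$ is needed.
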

\begin {proof}
It is not difficult to see that $f$ attains its minimum on $[0,1/A)$ at the point $x_0=0$ if $B\leq AM$ and at the point 
$$
x_1=\frac {(B/(AM))^{1/(s+1)}-1}{B+A(B/(AM))^{1/(s+1)}}
$$
if $B>AM$. In the first case we have
$$
f(x)\geq f(0)=M+1, \ \ x\in [0,1/A),\ \ s>0.
$$
In the second case, since
$$
x_1\leq \frac {1}{B}\left[(B/(AM))^{1/(s+1)}-1\right],
$$
we have
$$
f(x)\geq f(x_1)\geq M+(1+Bx_1)^{-s}\geq M+(B/(AM))^{-s/(s+1)}>M+AM/B
$$
for all $x\in [0,1/A)$ and $s>0$. Combining the results in both cases, we obtain the assertion of the lemma. \end {proof}

\medskip

\begin{proof}[{\bf Proof of Theorem \ref {Q_4}.}]
As we mentioned in Proposition~\ref{prop2} above, any cluster point of a family of $s$-energy
minimizing configurations as $s\to\infty$ is a best-packing
configuration. Thus, by Proposition~\ref{l1}, it is sufficient to show
that no 5-point configuration consisting of two opposite poles and an acute triangle on the
equator (which we call an {\em acute configuration}) could be such a cluster point. We will
prove this by contradiction. For $s$ large, consider a
minimal $s$-energy configuration that is `close' to a fixed acute
configuration. We may assume that this minimal $s$-energy
configuration $\omega_5(s)$ consists of three points $$
A_1=A_{1s}=(a_{11s},a_{12s},h), \,
A_2=A_{2s}=(a_{21s},a_{22s},h), \,
A_3=A_{3s}=(a_{31s},a_{32s},h),
$$
where $h=h_s=o(1)$ as $s\to\infty$,
  that are close to the vertices of a fixed
acute triangle on the equator,
and two points $A_{4}=A_{4s}$ and
$A_{5}=A_{5s}$ that are close to $(0,0,1)$ and $(0,0,-1)$,
respectively. Denote by
$$
E_1:=E_{1s}=\sum_{i=1}^3|A_4-A_i|^{-s}, \quad\text{and}\quad
E_2:=E_{2s}=\sum_{i=1}^3|A_5-A_i|^{-s}.
$$
Clearly, the total $s$-energy  $E_s(\omega_5(s))> 2E_1+2E_2$.

Let us first estimate $E_1$ from below. Denote by $O$ the point
$(0,0,h)$,  by $B$  the projection of $A_{4}$ to the plane
$A_{1}A_{2}A_{3}$, and by $x$ the length $|O-B|$. Without lost of
generality we may assume that $B$ lies in the triangle $OA_2A_3$. Here
we use the facts that $x=x_s=o(1)$ as $s\to\infty$, and
that $A_{1}A_{2}A_{3}$ is `close' to a fixed acute triangle implying that $O$ lies inside the triangle $A_{1}A_{2}A_{3}$. Denote by $\alpha=\alpha_s$,
$\beta=\beta_s$, and $\gamma=\gamma_s$ the angles $A_{2}OB$, $A_{3}OB$, $A_{2}OA_{1}$,
respectively (see Figure~\ref{triFig}).

\begin{figure}[htbp] 
    \centering
    \includegraphics[width=3in]{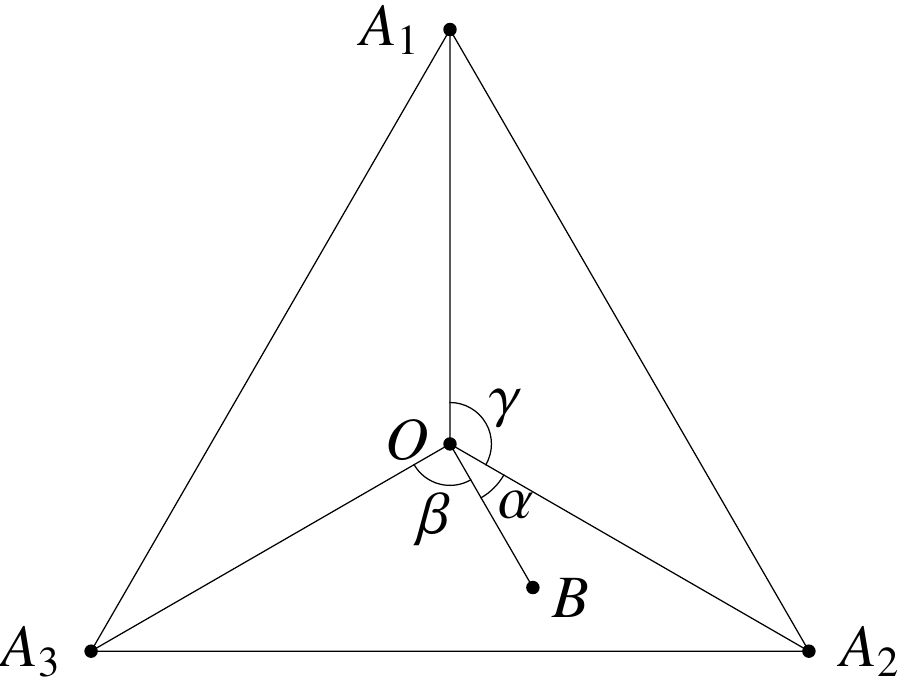}  
    
        \caption{Projection  $B$ of $A_4$ on horizontal $A_1A_2A_3$ plane. }
       \label{triFig}
 \end{figure}

Since 
$$
E_1=\sum_{i=1}^3|A_4-A_i|^{-s}=\sum_{i=1}^3\left(|B-A_4|^2+|B-A_i|^2\right)^{-s/2},
$$
we have, by the law of cosines and the fact that $|B-A_4|=\sqrt{1-x^2}-h$,

\begin{align*}
E_1&=(2-2h\sqrt{1-x^2}-2x\sqrt{1-h^2}\cos \alpha)^{-s/2}\\ &+(2-2h\sqrt{1-x^2}-2x\sqrt{1-h^2}\cos \beta)^{-s/2}\\&+(2-2h\sqrt{1-x^2}-2x\sqrt{1-h^2}\cos (\alpha+\gamma))^{-s/2}.
\end{align*}

 The crucial observation is the fact that $\alpha+\beta<\tau<\pi$, for
some $\tau$ that does not depend on $s$. Now monotonicity and convexity
of the function $t^{-s/2}$, $t>0$, immediately imply
\begin{align}\label{E11}
E_1 & \ge 
2\left(2-2h\sqrt{1-x^2}-x\sqrt{1-h^2}(\cos{\alpha}+\cos{\beta})\right)^{-s/2}\\&\hspace{2in} +(2-2h\sqrt{1-x^2}+2x)^{-s/2}
\nonumber \\
&\ge\nonumber
2\left(2-2h\sqrt{1-x^2}-x\sqrt{1-h^2}(1+\cos{\tau})\right)^{-s/2}\\&\hspace{2in}+(2-2h\sqrt{1-x^2}+2x)^{-s/2}.
\nonumber
\end{align}
From the facts that   $x=o(1)$, and
$h=o(1)$ as $s\to\infty$ and the inequality $1-x\le \sqrt{1-x^2}\le 1$, we get that 
 $$
E_1\ge2\left(2-2h-\theta_1 x\right)^{-s/2} +(2-2h+3x)^{-s/2},
$$
for some absolute constant $\theta_1>0$.  Then, by Lemma~\ref{M},
$$
E_1\ge (2+\theta_2)(2-2h)^{-s/2},
$$
for some absolute constant $\theta_2>0$.   
Similarly we obtain
$$
E_2\ge (2+\theta_2)(2+2h)^{-s/2},
$$
and so  again applying the convexity of $t^{-s/2}$ we finally deduce that, for $s$ sufficiently large,\begin{equation}
\label{min} E_s(\omega_5(s))>2(E_1+E_2)\ge\left(8+4\theta_2\right)2^{-s/2}.
\end{equation}

On the other hand, from Lemma~\ref{S_8}, we know that $\mathcal{E}_s(S^2,5)\le (8+o(1))2^{-s/2}.$
 Therefore, by~\eqref{min}, an
acute configuration cannot be a cluster point of minimal $s$-energy
configurations as $s\to\infty$.\end{proof}

We can now obtain the dominant term in the asymptotic expansion for the minimal 5-point $s$-energy.

\begin{theorem}
\label{last} We have
$$
\lim_{s\to\infty}{2^{s/2}\mathcal E_s(S^2,5)}= 8.
$$
\end{theorem}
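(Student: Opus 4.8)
The plan is to sandwich $\mathcal E_s(S^2,5)$ between $(8-o(1))\,2^{-s/2}$ and $(8+o(1))\,2^{-s/2}$. The upper estimate $\limsup_{s\to\infty}2^{s/2}\mathcal E_s(S^2,5)\le 8$ is exactly Lemma~\ref{S_8}, so everything rests on the matching lower bound $\liminf_{s\to\infty}2^{s/2}\mathcal E_s(S^2,5)\ge 8$. Fix $s$ large, let $\omega_5(s)=\{A_1,\dots,A_5\}$ be an $s$-energy minimizing configuration, and invoke Proposition~\ref{prop2} and Theorem~\ref{Q_4}: after an isometry we may assume $A_i\to Q_i$ as $s\to\infty$, where $Q_1={\rm e}_1$, $Q_2=-{\rm e}_1$, $Q_3={\rm e}_2$, $Q_4={\rm e}_3$, $Q_5=-{\rm e}_3$. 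Of the ten pairwise distances, the two ``diameter'' distances $|A_1-A_2|$ and $|A_4-A_5|$ tend to $2$, so their total contribution to $E_s$ is $o(2^{-s/2})$ and may be discarded; I would keep the four ``rim'' pairs $\{A_1,A_4\}$, $\{A_1,A_5\}$, $\{A_2,A_4\}$, $\{A_2,A_5\}$ joining the near-antipodal pair $\{A_1,A_2\}$ to the near-antipodal pair $\{A_4,A_5\}$, all of whose distances tend to $\sqrt2$.

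The engine of the lower bound is the elementary identity
\begin{equation*}
\sum_{i\in\{1,2\},\,j\in\{4,5\}}|A_i-A_j|^2\;=\;8-2\,\langle A_1+A_2,\;A_4+A_5\rangle,
\end{equation*}
valid since $|A_i|=|A_j|=1$, combined with the convexity of $t\mapsto t^{-s/2}$ on $(0,\infty)$. Namely, if one can show $g:=\langle A_1+A_2,\,A_4+A_5\rangle=o(1/s)$, so that the average of the four rim squared-distances equals $2+o(1/s)$, then Jensen's inequality yields $\sum_{\rm rim}|A_i-A_j|^{-s}\ge 4\,(2+o(1/s))^{-s/2}=(4-o(1))\,2^{-s/2}$, whence $\mathcal E_s(S^2,5)=E_s(\omega_5(s))\ge 2\sum_{\rm rim}|A_i-A_j|^{-s}\ge(8-o(1))\,2^{-s/2}$ and the theorem follows. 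This is essentially the scheme of the proof of Theorem~\ref{Q_4} run at the configuration $Q$ instead of at an acute configuration: the four rim pairs furnish the ``$2(\,\cdot\,)^{-s/2}$'' terms in the quantities $E_1$ and $E_2$ there, and since the limiting equatorial triangle of $Q$ is right-isosceles (the non-acute boundary case), the constant ``$\theta_2$'' gained there for acute triangles now degenerates to $o(1)$, leaving only the bare bound $E_1+E_2\ge(4-o(1))\,2^{-s/2}$.

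The main obstacle is establishing $g=o(1/s)$. Since $A_1+A_2\to0$ and $A_4+A_5\to0$ we have $g\to0$, but the soft convergence from Theorem~\ref{Q_4} only gives $g=o(1)$, and $g=o(1)$ does not suffice, as the factor $(1+O(g))^{-s/2}$ may then blow up or collapse. One half of what is needed is free: feeding the crude estimate $\mathcal E_s(S^2,5)\le(8+o(1))\,2^{-s/2}$ of Lemma~\ref{S_8} into $8\,(2-\frac12 g)^{-s/2}\le E_s(\omega_5(s))$ forces $g\le o(1/s)$. The remaining inequality $g\ge -o(1/s)$ is the crux; I expect it to require a finer local analysis of the minimizer near $Q$. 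A natural route is via the Euler--Lagrange (force-balance) equations: $A_3$ (which tends to ${\rm e}_2$) repels $A_1,A_2,A_4,A_5$, and this is the dominant \emph{unbalanced} force on each of them, so each of these four points is displaced from its limit essentially in the $-{\rm e}_2$ direction; hence the ``antipodality defects'' $A_1+A_2$ and $A_4+A_5$ both point, to leading order, into the half-space $\{x:\langle x,{\rm e}_2\rangle<0\}$, which makes $g\ge 0$ up to higher-order terms. Making this degeneration quantitative — equivalently, showing the height and off-axis defects of $\omega_5(s)$ relative to $Q$ are $o(1/\sqrt s)$ — is the technical heart; the rest is the displayed identity plus Jensen's inequality.
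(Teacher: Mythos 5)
Your reduction to the lower bound, the use of Lemma~\ref{S_8} for the upper bound, and the identification of the four ``rim'' pairs as the carriers of the $8\cdot 2^{-s/2}$ energy are all correct, and the identity $\sum_{i\in\{1,2\},j\in\{4,5\}}|A_i-A_j|^2=8-2\langle A_1+A_2,A_4+A_5\rangle$ combined with Jensen is a genuinely different and attractive scheme. But as you yourself flag, the argument is not a proof: everything hinges on showing $g:=\langle A_1+A_2,\,A_4+A_5\rangle\ge -o(1/s)$ for the minimizer, and this is precisely where all of the difficulty lives. Jensen's inequality throws away the spread of the four squared distances and retains only their mean $2-g/2$, so if the two antipodality defects $A_1+A_2$ and $A_4+A_5$ were anti-aligned with magnitudes of order larger than $s^{-1/2}$, your lower bound would degrade to $4(2+|g|/2)^{-s/2}=o(2^{-s/2})\cdot 4\cdot 2^{-s/2}$ and prove nothing. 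The soft convergence from Theorem~\ref{Q_4} gives only $g=o(1)$, and the force-balance heuristic (``$A_3$ pushes everything toward $-\mathrm e_2$'') is not quantified; turning it into the required $o(1/s)$ estimate on the displacement of a minimizer is itself a delicate local analysis that you have not carried out. So the proposal has a genuine gap at its central step.

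It is worth seeing how the paper avoids needing \emph{any} quantitative control on the minimizer. After using Theorem~\ref{Q_4} to fix the limit shape, it first disposes of the case where the triangles $A_1A_2A_3$ and $A_1A_4A_5$ are both non-acute by the arc-midpoint convexity trick ($|A_1-A_2|^{-s}+|A_1-A_3|^{-s}\ge 2^{1-s/2}$), and in the remaining case it projects $A_4$ and $A_5$ onto the plane of the acute triangle. Because $A_4$ and $A_5$ lie on opposite sides of that plane, the height parameter $h$ enters the two groups of distances with opposite signs, producing exactly the structure $M(1-Ah)^{-s}+(1+Bh)^{-s}$ handled by Lemma~\ref{M} --- and that lemma gives the bound $\ge M+\min\{1,AM/B\}$ \emph{uniformly in} $h$, with $A/B\to 1$. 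In other words, the asymmetric gain/loss trade-off that your Jensen step cannot see is resolved by a one-dimensional calculus lemma valid for all perturbation sizes, not just $o(1/s)$ ones. If you want to rescue your approach, you would need to replace the single Jensen application by a Lemma~\ref{M}-type pairing of rim distances in which a common perturbation parameter appears with opposite signs; as written, the claim $g\ge -o(1/s)$ remains unproven and the argument is incomplete.
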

\begin{proof}
By Lemma~\ref{S_8} it is enough to prove that
\begin{equation}
\label{in} \liminf_{s\to\infty}{2^{s/2}\mathcal
E_s(S^2,5)}\ge 8.
\end{equation}
For a fixed $s>0$ consider a minimal $s$-energy configuration $\omega_5(s)=\left\{
A_1,\ldots, A_5\right\}:=\left\{ A_{1s},\ldots, A_{5s}\right\}$. By
Theorem~\ref{Q_4} we may assume that both distances $|A_2-A_3|$ and
$|A_4-A_5|$ have limit $2$ as $s\to\infty$. Observe that if the triangle
$A_1A_2A_3$ is not acute, then
$$
|A_1-A_2|^{-s}+|A_1-A_3|^{-s}\ge |A'_1-A_2|^{-s}+|A'_1-A_3|^{-s}\ge 2^{1-s/2},
$$
where $A'_1$ is the midpoint of the circular arc (of length less than $\pi$) joining $A_2$ and $A_3$ and containing $A_1$.  
A similar statement holds for triangle $A_1A_4A_5$. Therefore  we may assume that at least one of the triangles
$A_1A_2A_3$ or $A_1A_4A_5$ is acute since otherwise the desired lower bound for the $s$-energy follows. Without lost of generality, we
assume that $A_1A_2A_3$ is acute.  We adopt the same notation as in the proof of Theorem~\ref{Q_4} and obtain a finer lower bound for $E_1$ and $E_2$.   

 There are three possible
cases to consider, depending on the location of the projection $B$ of $A_4$ onto the plane containing $A_1$, $ A_2$, and $A_3$:  
(i) $B$ is inside the sector $A_2OA_3$; (ii)
$B$ is inside the sector $A_1OA_2$; and (iii) $B$ is inside the sector $A_1OA_3$.

Let us assume first that $B$ is inside the sector $A_2OA_3$ as in Figure~\ref{triFig}.
From \eqref{E11}, we get 
$$
E_1\ge 2
(2-2h\sqrt{1-x^2}-x\sqrt{1-h^2}(\cos\alpha+\cos\beta))^{-s/2}\ge
2(2-2h\sqrt{1-x^2})^{-s/2}.
$$
In both other cases (ii) and (iii) we get the same inequality. Letting $D$ denote the
projection of $A_5$ onto the plane $A_1A_2A_3$ and setting $y=|O-D|$, we similarly
get  
$$
E_2\ge 2(2+2h\sqrt{1-y^2})^{-s/2}.
$$
Thus,
$$
E_s(\omega_5(s))>4\left[(2-2h\sqrt{1-x^2})^{-s/2}+(2+2h\sqrt{1-y^2})^{-s/2}\right].
$$
Finally applying Lemma~\ref{M} to the last inequality and using the
fact that $x=o(1)$ and $y=o(1)$ as $s\to\infty$ we immediately
obtain~\eqref{in}.
\end{proof}

\section{Special best-packing configurations on $S^n$}

 In the case   $A=S^n$ with $n\ge 2$ and 
Euclidean distance, there are best-packing configurations $\omega_N$
 such that $\eta(\omega_N,S^n)=\delta(\omega_N)=\sqrt{2}$  for
$N=n+3,\ldots, 2n+1$, yielding $\gamma(\omega_N,S^n)=1$ (see Theorem 6.2.1~\cite{KB}).
For $N=5$ on $S^2$,  such a configuration is given by SBP$(\infty)$ defined in \eqref{Qp}.

 By the proof of
Theorem~\ref{Thm0}, we have $\eta(\omega_N,A)\ge\delta(\omega_N)$
for some best-packing configuration $\omega_N$ if and only if
$\delta_N(A)=\delta_{N+1}(A)$, which should be a very rare
event, at least for $A=S^n$ .    For $S^2$ and $N=11$ there exists a unique (up to isometry) best-packing
configuration $\omega_{11}$ consisting of the regular icosahedron minus one of its vertices (see \cite{B2}). Hence, \begin{equation}
\label{11p} \eta_{11}(S^2)=\delta_{11}(S^2) \text{ and } \gamma(\omega_{11},S^2)=1.
\end{equation}

The unique best-packing configuration of $120$ points on $S^3$ is
the $600$-cell  configuration which has many other
fascinating extremal properties, see~\cite{A,CK}. Moreover,
in~\cite{NS}, the numerical evidence is given that
\begin{equation}
\label{113p}
\delta_{113}(S^3)=\ldots=\delta_{120}(S^3)=(\sqrt{5}-1)/2.
\end{equation}
Assuming~\eqref{113p}, we are able to construct a best-packing
configuration of $113$ points on the sphere with
$\eta(\omega_{113},S^3)>\delta_N(\omega_{113})$. It consists of
$600$-cell without certain 7 points which we describe below.

In the $600$-cell each point has 12 other points at the closest
distance $(\sqrt{5}-1)/2$, and each pair of points at this distance
has exactly 5 other points having the same distance to both
points of the pair. So we will remove two points $x_1, x_2$, such
that $
|x_1-x_2|=(\sqrt{5}-1)/2$, and also 5 points $y_1,\ldots
y_5$, such that $|x_i-y_j|=(\sqrt{5}-1)/2$, $i=1,2$,
$j=1,\ldots,5$. Recall that the second largest distance between points
of the $600$-cell is~1. Thus,
$$
\eta(\omega_{113},S^3)\ge\min_{x\in\omega_{113}}\left|\frac{x_1+x_2}{|x_1+x_2|}-x\right|=
\sqrt{2-\frac{3+\sqrt{5}}{\sqrt{10+2\sqrt{5}}}}\approx
1.2778\,\delta(\omega_{113}).
$$

\noindent
{\bf Acknowledgements.}  The authors thank the Mathematisches Forschungsinstitut Oberwolfach for their hospitality during the preparation of this manuscript and 
for providing a stimulating atmosphere for research.

\end{document}